\newcommand{\cp}{\ensuremath{X}}
\newcommand{\R}{\ensuremath{\mathbb{R}}}
\renewcommand{\tt}[1]{{#1}'}
\newcommand{\inv}[1]{{#1}^{-1}}
\newcommand{\babs}[1]{\llbracket #1 \rrbracket}
\newcommand{\bbabs}[1]{\bigg\llbracket #1 \bigg\rrbracket}
\newcommand{\abs}[1]{\left\| #1 \right\|}
\newcommand{\tabs}[1]{\| #1 \|}
\newcommand{\e}[1]{\text{e}^{#1}}
\newtheorem{thm}{Theorem}
\newtheorem{lemma}{Lemma}
\newtheorem{remark}{Remark}
\newtheorem{standAss}{Standing Assumption}
\title[Diffusions with mixed effects]{Multivariate inhomogeneous diffusion models with covariates and mixed effects}
\author[Gro\ss e Ruse  {\it et al.}]{Mareile Gro\ss e Ruse}
\address{University of Copenhagen, Copenhagen, Denmark}
\email{mareile@math.ku.dk}
\author{Adeline Samson}
\address{Laboratoire Jean Kutzmann, Universit\'{e} Grenoble-Alpes, Grenoble, France}
\author[M. Gro\ss e Ruse, A. Samson and S. Ditlevsen]{Susanne Ditlevsen}
\address{University of Copenhagen, Copenhagen, Denmark.}
\begin{document}

\begin{abstract}
Modeling of longitudinal data often requires diffusion models that incorporate overall time-dependent, nonlinear dynamics of multiple components and provide sufficient flexibility for subject-specific modeling. This complexity challenges parameter inference and approximations are inevitable. We propose a method for approximate maximum-likelihood parameter estimation in multivariate time-inhomogeneous diffusions, where subject-specific flexibility is accounted for by incorporation of multidimensional mixed effects and covariates. We consider $N$ multidimensional independent diffusions $X^i = (X^i_t)_{0\leq t\leq T^i}, 1\leq i\leq N$, with common overall model structure and unknown fixed-effects parameter $\mu$. Their dynamics differ by the subject-specific random effect $\phi^i$ in the drift  and possibly by (known) covariate information, different initial conditions and observation times and duration. The distribution of $\phi^i$ is parametrized by an unknown $\vartheta$ and $\theta = (\mu, \vartheta)$ is the target of statistical inference.  Its maximum likelihood estimator is derived from the continuous-time likelihood. We prove consistency and asymptotic normality of $\hat{\theta}_N$ when the number $N$ of subjects goes to infinity using standard techniques and consider the more general concept of local asymptotic normality for less regular models. The bias induced by time-discretization of sufficient statistics is investigated. We discuss verification of conditions and investigate parameter estimation and hypothesis testing in simulations.

\end{abstract}

\keywords{Approximate maximum likelihood, asymptotic normality, consistency, covariates, LAN, mixed effects, non-homogeneous observations,  random effects, stochastic differential equations}

\section{Introduction}

Many physical and biological processes recorded over time exhibit time-varying and non-linear dynamics.  There are two key demands for  reliable statistical investigation of such data: First, the model should enable a sufficiently comprehensive description of the dynamics, which translates into a minimum requirement on the model complexity. Often, this is well captured by ordinary differential equations (ODEs) with a suitable degree of non-linearity and dimensionality. Second, the model should be sufficiently parsimonious to facilitate robust statistical estimation. This parsimony can to some degree be achieved by including mixed effects \citep{pinheiro2006mixed, davidian2003nonlinear}: While assuming an overall model structure for all experimental units, some parameters are allowed to vary across the population to capture also individual-specific characteristics. Hence, it is not surprising that ODE models with mixed effects have become a  popular tool for inference on longitudinal data, \citep{lindstrom1990nonlinear, wolfinger1993laplace, tornoe2004non, guedj2007maximum, wang2007algorithms, ribba2014review, lavielle2014mixed}. This framework has, however, one important deficiency: The deterministic nature of ODE models does not capture uncertainties in the model structure and this can lead to biased estimates and false inference. For example, if the data have a periodic component with fluctuations in the phase, least squares estimation from a deterministic ODE will lead to predicted dynamics that are close to constant, and other estimation methods have to be sought for \citep{ditlevsen2005parameter}. 
This shortcoming can be addressed by replacing ODEs with stochastic differential equations (SDEs), thereby facilitating a more robust estimation \citep{donnet2010bayesian, moller2010, leander2014stochastic}. Furthermore, SDE models with mixed effects use data more efficiently. 
Consider an SDE model based on one longitudinal measurement. It is well-known that for fixed time horizon the drift estimator is inconsistent \citep{kessler2012statistical}. In many applications one has observations of several experimental units at hand, but dynamics seem too subject-specific to assume that all individuals share the same parameter values. This prohibits the otherwise natural approach to reduce the bias by pooling the data. However, when dynamics are individual-specific but structurally similar, the unknown drift parameter can be modeled as a mixed effect, which may reduce bias considerably. In combination with mixed effects, nonlinear, time-inhomogeneous and multidimensional SDEs thus become a highly versatile tool for the intuitive and comprehensive modeling of complex longitudinal data, allowing for more robust statistical inference. This framework of stochastic differential mixed-effects models (SDMEM) lends itself to numerous applications and thereby opens up for new insights in various scientific areas. \looseness=-1

Combining the benefits that are specific to mixed-effects and SDE models, however, entails particular challenges in terms of statistical inference. The key challenge lies in the intractability of the data likelihood, which now has two sources: The likelihood for (nonlinear) SDE models (given fixed parameter values) is  analytically not available, rendering parameter inference for standard SDE models a nontrivial problem in itself. This intractable quantity has then to be integrated over the distribution of the random effects and one realizes that numerical or analytical approximations are inevitable. 
%
%
The likelihood in SDE models can be approximated in various ways. Given discrete-time observations, the likelihood  is expressed in terms of the transition density. Approximation methods for the latter reach from solving the Fokker-Planck equation numerically \citep{lo1988maximum}, over  standard first-order (Euler-Maruyama) or higher-order approximation schemes and simulation-based approaches \citep{pedersen1995new, durham2002numerical} to a closed-form approximation via Hermite polynomial expansion \citep{ait2002maximum}. If continuous-time observations are assumed (e.g., if high-frequency data is available), transition densities are not needed and the likelihood can be obtained from the Girsanov formula \citep{phillips2009maximum}.
Popular analytical approximation techniques for nonlinear mixed-effects models are first-order conditional estimation (FOCE)  \citep{beal1981estimating} and Laplace approximation \citep{wolfinger1993laplace}.  A computational alternative to analytical approximation is the  expectation-maximization (EM) algorithm, or stochastic versions thereof  \citep{delyon1999convergence}.
\looseness=-1

In the  context of SDMEMs, the above mentioned approximation methods have been combined in various ways, depending on whether observations are modeled in discrete or continuous time and with or without measurement noise. 
Models for discrete-time data including measurement noise require marginalization of the likelihood over both the state and the random effects distribution. 
Approximation via FOCE in combination with the extended Kalman filter has been pursued by several authors \citep{tornoe2005stochastic, overgaard2005non, mortensen2007matlab, klim2009population, leander2014stochastic, leander2015mixed}. The most general setting was considered by \cite{leander2015mixed}, who allowed drift and diffusion function of the multivariate state SDE to depend on time, state and the individual parameters. However, theoretical convergence properties are not available in this setting. 
%
\cite{donnet2008parametric} approximate  a one-dimensional SDE by the Euler-Maruyama scheme, and employ a stochastic approximation EM algorithm to avoid marginalization entirely.  To circumvent the computationally expensive simulation of the SDE solution, \cite{delattre2013coupling} consider only the random effects as latent and use a Metropolis-Hastings algorithm to simulate these conditioned on the observation. Finally, a Bayesian setting for a one-dimensional homogeneous diffusion is considered by \cite{donnet2010bayesian}. 
In models for discrete-time observations without observation noise,  \cite{ditlevsen2005mixed}, consider one-dimensional linear SDEs with linear mixed effects, where the likelihood is available in closed form. \cite{picchini2010stochastic} and \citet{picchini2011practical} approximate the transition density by Hermite expansion and explore  Gaussian quadrature algorithms and Laplace's approximation to compute the integral over the mixed effect. Mixed effects that enter the diffusion coefficient are investigated by \cite{delattre2015estimation}.
Continuous-time observations are the starting point of investigations in \cite{delattre2013maximum}. They consider a univariate SDMEM without measurement noise for Gaussian mixed effects, which enter the drift linearly. \looseness=-1

None of the previously mentioned works provide theoretical investigations of the estimators, when the state process is modeled by a multivariate, time-inhomogeneous and nonlinear SDE. However, many biological and physical processes are time-varying and require a certain degree of model complexity, such as non-linearity and multidimensional states. Furthermore, none of the cited works include covariates. Especially for practitioners, being used to regression analyses, not including covariate information in a model seems highly restrictive.\looseness=-1 

The purpose of this article is two-fold. On the one hand, we extend the setting in \cite{delattre2013maximum} to a multidimensional state process with non-linear, time-inhomogeneous dynamics. We obtain an integral expression for the  likelihood. If the drift function for the SDE is linear in the random effect and if the random effects are   independent and identically $\mathcal{N}(0,\Omega)$-distributed with unknown covariance matrix $\Omega$, the integral expression for the  likelihood can be solved explicitly. If also the fixed effect enters the drift linearly, the likelihood turns into a neat expression, in which all remaining model complexity (multidimensionality of the state, nonlinearity, covariates) is conveniently hidden in the sufficient statistics. Under standard, but rather strict regularity conditions, we derive in subsection \ref{asymptotics_iid} the consistency and asymptotic normality of the MLE of the fixed effect and $\Omega$. We  also investigate the discretization error that arises when replacing the continuous-time statistics by their discrete-time versions. All results here can be shown using the same techniques as in \cite{delattre2013maximum}, however, they are tedious to write down due to the more general and multidimensional setup. Therefore, these proofs are omitted here. Nevertheless, the approach has two main drawbacks. The first one is model-related: It is assumed that observations are identically distributed, not allowing for subject-specific covariates. The other is proof-related: The imposed regularity assumptions are rather restrictive, for instance, the density of the random effects may not be smooth. The second part of this work addresses these two issues. In subsection \ref{asymptotics_noniid}, we allow the inclusion of deterministic covariates. Moreover, we switch from the standard verification of asymptotic properties of the MLE to a more general strategy, which builds upon the concept of \text{local asymptotic normality} (LAN) of statistical experiments as introduced by  \cite{le2012asymptotic}, and further studied by \cite{ibragimov2013statistical}. It allows statements on asymptotic normality of estimators even when the density functions have some degree of "roughness". The general conditions for consistency and asymptotic normality of the MLE that were formulated by \cite{ibragimov2013statistical} are adapted to our setting. We then discuss intuitive conditions (on, for instance, the $N$-sample Fisher information), which are familiar from regression settings (such as $\frac{1}{N}I_N(\theta)\rightarrow I(\theta)$) and which help to verify the assumptions for MLE asymptotics. We point out the difficulties that arise with these conditions when observations are generated by not identically distributed SDMEMs, and propose a way to remedy these issues. 
Section \ref{simulations} covers simulations for multivariate models that are linear in the fixed and random effects, the latter being Gaussian distributed. In this setting the likelihood is explicitly available. The first example is linear in state  and covariates. More specifically, we consider a multidimensional Ornstein-Uhlenbeck model with one covariate having two levels (e.g., \textit{treatment} and \textit{placebo}). This type of model is  common in, e.g., pharmacokinetics, and is motivated by a recent study on Selenium metabolism in humans \citep{ruse2015absorption}. Moreover, we perform hypothesis testing on the treatment effect and investigate the performance of the Wald test. The second example is the stochastic Fitzhugh-Nagumo model, used to model electrical activity in neurons, which, after parametrization, is still linear in the parameter, but non-linear in the state. We explore the quality of estimation for different sample sizes and sampling frequencies.

\section{Preliminaries}
This section introduces the general statistical setup and derives the likelihood function. \looseness=-1

\subsection{The setting}
We consider $N$ $r$-dimensional stochastic processes $X^i = (X^i_t)_{0\leq t\leq T^i}$ whose dynamics are governed by the stochastic differential equations
\begin{align}\label{SDE}
dX_t^{i}& = F(t,X_t^{i},\mu, \phi^i)dt + \Sigma(t,X_t^{i}) dW_t^i, \quad 0\leq t\leq T^i, \quad X_0^{i}= x_0^i,   \qquad i=1,\ldots,N. 
\end{align}
The $r$-dimensional Wiener processes $W^i = (W^i_t)_{t\geq 0}$ and the $d$-dimensional random vectors $\phi^i$ are defined on some filtered probability space $(\Omega,\mathcal{F},(\mathcal{F}_t)_{t\geq 0}, \mathbb{P})$, which is rich enough to ensure independence of all random objects $W^i,\phi^i, i=1,\ldots,N$. The $d$-dimensional vectors $\phi^i, i=1,\ldots,N$, are the so-called \textit{random effects}. They are assumed to be $\mathcal{F}_0$-measurable and  have a common (usually centered) distribution which is specified by a (parametrized) Lebesgue density $g(\varphi;  \vartheta)d\varphi$. The parameter $\vartheta\in\R^{q-p}$ is unknown as well as the \textit{fixed effect} $\mu\in\R^{p}$. Together, these two quantities are gathered in the parameter $\theta = (\mu, \vartheta)$. This is the object of statistical inference and is assumed to lie in the parameter space $\Theta$, which is a subset of $\mathbb{R}^q$.  
The functions $F:[0,T]\times\R^r\times\R^p\times\R^d\rightarrow\R^r, \Sigma:[0,T]\times\R^r\rightarrow\R^{r\times r}$ with $ T = \max_{1\leq i\leq N} T^i$, are deterministic and known and the initial conditions $x_0^i$ are independent and identically distributed (i.i.d.) $r$-dimensional random vectors.  We assume  that we observe $X^i$  at time points $0 \leq t^i_0 < t^i_{1}<\ldots <t^i_{n_i} = T^i$  and the inference task consists in recovering the "true" underlying $\theta$ based on the observations $X^i_{t_{0}^i},\ldots, X^i_{t_{n}^i}$ of $X^i, i=1,\ldots, N$. To this end, we first suppose to have the entire paths $ (X^i_t)_{0\leq t\leq T^i}$,\\ $ i=1,\ldots,N,$ at our disposal and derive the continuous-time MLE. Later on we will investigate the error that arises when it is approximated by its discrete-time analogue. 

\begin{standAss}
	To assure that the inference problem is well-defined, we assume that the coefficient functions $F, \Sigma$ and the distributions of  initial conditions $x^i_0$ and random effects $\phi^i$ are such that \eqref{SDE} has unique, continuous solutions $X^i, i=1,\ldots,N$, satisfying $\sup_{0\leq t\leq T^i}\mathbb{E}\left(\abs{X^{i}_t}^{k}\right)<\infty$  for all $k\in\mathbb{N}$. If not further specified, it is assumed that $\mu,\varphi$  lie in relevant subspaces of $\R^p$ and $\R^d$, respectively. Natural choices would be the projection of $\Theta$ on the $\mu$-coordinate and the support of the $g(\cdot; \vartheta)$ - or the largest support of those, if $g(\cdot; \vartheta)$ and $g(\cdot; \tilde{\vartheta})$ have different supports for different $\vartheta, \tilde{\vartheta}$. We assume that for all $\mu,\varphi$ (in relevant subspaces) there is a continuous, adapted solution $X^{i,\mu, \varphi}$ to
	\begin{align}\label{SDEvarphi}
	dX^{i,\mu, \varphi}_t &= F(t,X^{i,\mu, \varphi}_t, \mu, \varphi)dt + \Sigma(t,X^{i,\mu, \varphi}_t)dW^i_t, \quad 0\leq t\leq T^i, \quad X_0^{i,\mu,\varphi} = x_0^i,
	\end{align}
	with existing moments of any order (see above). We moreover assume that there are $\mu_0,\varphi_0$ such that for all $\mu,\varphi$ the $X^{i,\mu, \varphi}$ satisfy, with $\Gamma = \Sigma\tt{\Sigma}$, 		  
	$$\mathbb{P}\left(\int_0^{T}\tt{F(s,X_s^{i,\mu, \varphi}, \mu_0, \varphi_0)}\inv{\Gamma(s,X^{i,\mu, \varphi}_s)} F(s, X_s^{i,\mu, \varphi}, \mu_0, \varphi_0) ds <\infty\right) = 1.$$ 
\end{standAss}

\begin{remark}
	Sufficient assumptions for the above are the standard Lipschitz and sublinear growth conditions on $F$  and  $\Sigma$. 
\end{remark}

In all what follows, an integral of a matrix will be understood component-wise, $\abs{\cdot}$ denotes the Euclidean vector norm, $\babs{\cdot}$ a sub-multiplicative matrix norm, $\tt{A}$ the transpose of a matrix $A$ and $\inv{A}$ its inverse. Further,  all statements given for subject $i$ are implied to hold for all $i=1,\ldots,N$. To ease notation, we will write $F_{\mu,\varphi}(s,x)$ instead of $F(s,x,\mu,\varphi)$ and simply $F(s,x)$ for $F_{\mu_0,\varphi_0}(s,x)$.

We denote by $C_{T}$ the space of continuous $\R^r$-valued functions defined on $[0,T]$, which is endowed with the Borel-$\sigma$-algebra $\mathcal{C}_{T}$, the latter being associated with the topology of uniform convergence. A generic vector in $\R^r$ is denoted by $x$, while we use $\mu$ for one such in $\R^p$ and $\varphi$ for one in $\R^d$.  
If $X^{i,\mu, \varphi}$ is the unique continuous solution to \eqref{SDEvarphi}, we denote the measure induced by $X^{i,\mu, \varphi}$ on the space $(C_{T^i},\mathcal{C}_{T^i})$  by $\mathbb{Q}_{\mu, \varphi}^i$ and for $\phi^i\sim g(\cdot; \vartheta)$ and $X^{i}$ as the unique continuous solution to \eqref{SDE}  under $\mathbb{P}_\theta$, we write $\mathbb{Q}^i_\theta$ for the distribution of $X^i$ on $(C_{T^i},\mathcal{C}_{T^i})$. On the product space $\prod_{i=1}^{N} C_{T^i}$ we introduce the product measures $\mathbb{Q}_{N,\mu, \varphi} = \bigotimes_{i=1}^N \mathbb{Q}^i_{\mu, \varphi}$ and $\mathbb{Q}_{N,\theta} = \bigotimes_{i=1}^N \mathbb{Q}^i_\theta$.
Expectations with respect to $\mathbb{Q}^i_{\mu,\varphi}$ and $\mathbb{Q}^i_\theta$ are written as $\mathbb{E}_{\mu, \varphi}$ and $\mathbb{E}_\theta$, respectively, (for convenience, omitting the index $i$) and expectation with respect to the joint distribution of $(\phi^i, X^i)_{1\leq i\leq N}$ on the product space $\prod_{i=1}^{N} \left(\R^d\times C_{T^i}\right)$ will be denoted by $\mathbb{E}_\theta$. 
From now on, we let $\left(\phi^i,\cp^i\right)$ be the canonical process on $\R^d\times C_{T^i}$.

\subsection{Derivation of the likelihood}
One can show that the conditional distribution of $X^i$ conditioned on $\phi^i = \varphi$ coincides with the distribution of $X^{i, \mu,\varphi}$. This, combined with Fubini's theorem, implies that the likelihood of $X^i$, denoted by $p^i(\theta)$, is the integral over the likelihood $q^i(\mu,\varphi)$ of $X^{i,\mu,\varphi}$, weighted by $g(\varphi; \vartheta)$, that is, $p^i(\theta) = \int q^i(\mu,\varphi) g(\varphi; \vartheta) d\varphi$. The following theorem, which is a standard result from the theory on SDEs, specifies the likelihood $q^i(\mu,\varphi)$.

\begin{thm}[Conditional likelihood]\label{conditional_likelihood}
	The distribution $\mathbb{Q}^i_{\mu, \varphi}$  is dominated by\\ $\nu^i:=\mathbb{Q}^i_{\mu_0, \varphi_0}$ and the Radon-Nikodym derivative is a.s. given by
	\begin{align*}
	q^i(\mu,\varphi; X^i) &=
	\text{exp}\left( \int_{0}^{T^i}\tt{\left[F_{\mu,\varphi}(s, \cp_s^i) - F(s,\cp_s^i)\right]}\inv{\Gamma(s,\cp^i_s)}d\cp^i_s \right. \\
	& \qquad- \left.\frac{1}{2}\int_0^{T^i}\tt{\left[F_{\mu,\varphi}(s, \cp_s^i) - F(s,\cp_s^i)\right]}\inv{\Gamma(s,\cp^i_s)}\left[F_{\mu,\varphi}(s, \cp_s^i) + F(s,\cp_s^i)\right]ds\right).
	\end{align*}
\end{thm}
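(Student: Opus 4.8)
The plan is to recognize this as a direct application of Girsanov's theorem to two diffusions that share the diffusion coefficient $\Sigma$ but differ only in their drift, so that their path-space laws are mutually absolutely continuous. Under the dominating measure $\nu^i = \mathbb{Q}^i_{\mu_0,\varphi_0}$ the canonical process $\cp^i$ solves $d\cp^i_t = F(t,\cp^i_t)dt + \Sigma(t,\cp^i_t)dB_t$ for some $\nu^i$-Brownian motion $B$, so the driving noise is recovered as $dB_t = \inv{\Sigma}(t,\cp^i_t)[d\cp^i_t - F(t,\cp^i_t)dt]$ (invertibility of $\Sigma$, hence of $\Gamma$, being implicit in the appearance of $\inv{\Gamma}$). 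Writing the drift difference as $b_s := F_{\mu,\varphi}(s,\cp^i_s) - F(s,\cp^i_s)$ and setting $h_s := \inv{\Sigma}(s,\cp^i_s)\,b_s$, Girsanov's theorem delivers the Radon--Nikodym derivative of $\mathbb{Q}^i_{\mu,\varphi}$ with respect to $\nu^i$ as the Dol\'eans exponential $\exp\!\left(\int_0^{T^i}\tt{h_s}\,dB_s - \tfrac12\int_0^{T^i}\abs{h_s}^2\,ds\right)$.

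First I would substitute the expression for $dB_s$ and use the identity $\tt{(\inv{\Sigma})}\inv{\Sigma} = \inv{(\Sigma\tt{\Sigma})} = \inv{\Gamma}$ to rewrite $\tt{h_s}\,dB_s = \tt{b_s}\inv{\Gamma}(s,\cp^i_s)[d\cp^i_s - F(s,\cp^i_s)\,ds]$ and $\abs{h_s}^2 = \tt{b_s}\inv{\Gamma}(s,\cp^i_s)b_s$. Collecting the $ds$-terms then gives, with $b = F_{\mu,\varphi} - F$,
\[
-\tt{b}\inv{\Gamma}F - \tfrac12\tt{b}\inv{\Gamma}b = -\tfrac12\tt{b}\inv{\Gamma}(2F + b) = -\tfrac12\tt{[F_{\mu,\varphi}-F]}\inv{\Gamma}[F_{\mu,\varphi}+F],
\]
which is precisely the quadratic term in the statement, while the surviving stochastic integral $\int_0^{T^i}\tt{[F_{\mu,\varphi}(s,\cp^i_s)-F(s,\cp^i_s)]}\inv{\Gamma}(s,\cp^i_s)\,d\cp^i_s$ is the first term. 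This step is routine; the only care needed is to track the symmetric matrix $\inv{\Gamma}$ so that the cross terms combine cleanly.

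The main obstacle is the analytic justification that this Dol\'eans exponential is a genuine density, i.e.\ that the exponential local martingale is a true martingale of expectation one, so that $\mathbb{Q}^i_{\mu,\varphi}\ll\nu^i$ actually holds with the claimed derivative rather than merely a local-martingale identity. Here I would lean on the finite-energy condition built into the Standing Assumption, namely $\P{\int_0^{T}\tt{F(s,\cp^{i,\mu,\varphi}_s)}\inv{\Gamma}(s,\cp^{i,\mu,\varphi}_s)F(s,\cp^{i,\mu,\varphi}_s)\,ds<\infty}=1$, together with the assumed existence, uniqueness and moment bounds of the solutions $\cp^{i,\mu,\varphi}$. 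Combined with uniqueness in law for both drift specifications, this places us within the classical framework for absolute continuity of diffusion-type measures (e.g.\ Liptser--Shiryaev), where uniqueness in law is exactly what excludes a mass defect and upgrades the local martingale to a true martingale. I would therefore organize the argument as: (i) invoke the Girsanov/absolute-continuity theorem under the finite-energy hypothesis, (ii) identify $h_s$ and write down the generic density, and (iii) carry out the substitution and simplification above to match the stated formula.
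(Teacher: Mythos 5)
Your proposal is correct and follows exactly the route the paper intends: the paper states Theorem \ref{conditional_likelihood} as ``a standard result from the theory on SDEs'' without proof, and that standard result is precisely the Girsanov/Liptser--Shiryaev absolute-continuity theorem for diffusion-type processes sharing a diffusion coefficient, with the finite-energy condition of the Standing Assumption justifying that the Dol\'eans exponential is a genuine density. Your algebraic reduction via $\tt{(\inv{\Sigma})}\inv{\Sigma}=\inv{\Gamma}$ and the combination of the $ds$-terms into $-\tfrac12\tt{[F_{\mu,\varphi}-F]}\inv{\Gamma}[F_{\mu,\varphi}+F]$ matches the stated formula, so nothing is missing.
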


\begin{standAss}{}
	To assure that  $p^i$ is well-defined and measurable, the integrand 
	${(\mu, \varphi, \vartheta, x)\mapsto q^i(\mu,\varphi; x)g(\varphi; \vartheta)}$ should be measurable (w.r.t. the product Borel-$\sigma$-algebra). Therefore, we assume that the model is such that $g$ and $q^i$ are both product-measurable.  From above, we see that $q^i$ is measurable in the $x$-component, such that $q^i(\mu,\varphi; x)$ is surely product-measurable if it is continuous in its remaining two components. 
	For instance, a sufficient condition for continuity in $\varphi$ is 
	that for any $\mu$  there  is $\kappa = \kappa(\mu)>0$ such that for all $\varphi_0, \varphi,  x$ (in relevant sets)  and $0\leq s\leq T^i$, one has 
	$\abs{\tt{\left[F_{\mu, \varphi}(s, x) - F_{\mu, \varphi_0}(s,x)\right]}\inv{\Gamma(s,x)}} \leq K(1+\abs{x}^\kappa)\abs{\varphi-\varphi_0}.$
	This, together with a linear growth condition on $\Sigma$, implies that there is a $\tilde{\kappa}$ such that 
		\begin{align*}
		\abs{(\tt{\left[F_{\mu, \varphi}- F_{\mu, \varphi_0}\right]}\inv{\Gamma}\left[F_{\mu, \varphi} - F_{\mu, \varphi_0}\right])(s,x)} \leq C(1+\abs{x}^{\tilde{\kappa}})\abs{\varphi-\varphi_0}^2. 
		\end{align*}
	One can then apply Kolmogorov's continuity criterion, which will  yield the  continuity (rather, existence of an in $\varphi$ continuous version) of $q^i$ in $\varphi$. 
\end{standAss}

The  likelihood of the sample ${X}^{(N)} = (X^1,\ldots, X^N)$ is now an immediate consequence of Fubini's theorem.

\begin{thm} [Unconditional likelihood]\label{final_density}
	The distribution $\mathbb{Q}^i_\theta$ admits the $\nu^i$-density 
	$p^i(\theta):=p^i(\theta; X^i) = \int_{\R^d}q^i(\mu,\varphi)\cdot g(\varphi;\vartheta)\, d\varphi$
	and the corresponding product measure $\mathbb{Q}_{N,\theta}$  has the $\nu_N$-density 
	$p_N(\theta):= p_N(\theta; {X}^{(N)})= \prod_{i=1}^{N}p^i(\theta)$ (with $\nu_N =\bigotimes_{i=1}^N\nu^i$).
\end{thm}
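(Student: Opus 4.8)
The plan is to prove the single-subject density claim first and then lift it to the product measure by independence. To identify the $\nu^i$-density of $\mathbb{Q}^i_\theta$, I would verify the defining relation $\int h\, d\mathbb{Q}^i_\theta = \int h\cdot p^i(\theta)\, d\nu^i$ for every bounded measurable $h\colon C_{T^i}\to\R$. Since two finite measures that integrate all bounded measurable functions identically coincide, establishing this relation is exactly what it means for $p^i(\theta)$ to be the Radon--Nikodym derivative $d\mathbb{Q}^i_\theta/d\nu^i$, and in particular it forces $\mathbb{Q}^i_\theta\ll\nu^i$.

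First I would condition on the random effect. Using the disintegration recalled just before the statement (the law of $\cp^i$ given $\phi^i=\varphi$ equals $\mathbb{Q}^i_{\mu,\varphi}$) together with $\phi^i\sim g(\cdot;\vartheta)$, the tower property yields
\[
\int h\, d\mathbb{Q}^i_\theta = \mathbb{E}_\theta\!\left[h(\cp^i)\right] = \int_{\R^d}\mathbb{E}_{\mu,\varphi}\!\left[h(\cp^i)\right] g(\varphi;\vartheta)\, d\varphi = \int_{\R^d}\left(\int_{C_{T^i}} h\, d\mathbb{Q}^i_{\mu,\varphi}\right) g(\varphi;\vartheta)\, d\varphi.
\]
I would then insert the conditional Radon--Nikodym derivative from Theorem \ref{conditional_likelihood}, replacing $d\mathbb{Q}^i_{\mu,\varphi}$ by $q^i(\mu,\varphi;\cp^i)\, d\nu^i$, so that the right-hand side becomes the iterated integral $\int_{\R^d}\int_{C_{T^i}} h(\cp^i)\, q^i(\mu,\varphi;\cp^i)\, g(\varphi;\vartheta)\, d\nu^i\, d\varphi$.

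The crux is a single application of Tonelli's theorem to interchange the $\varphi$- and path-integrations. This is precisely where Standing Assumption (SA2) does its work: it guarantees that $(\varphi,\cp^i)\mapsto q^i(\mu,\varphi;\cp^i)\,g(\varphi;\vartheta)$ is jointly product-measurable, and since both $q^i$ (an RN derivative) and $g$ (a density) are nonnegative, Tonelli applies with no integrability side condition and also certifies that the marginal integral $p^i(\theta;\cp^i)=\int_{\R^d} q^i(\mu,\varphi;\cp^i)\,g(\varphi;\vartheta)\, d\varphi$ is a measurable function on $C_{T^i}$. After swapping the order of integration the inner $\varphi$-integral is exactly $p^i(\theta)$, giving $\int h\, d\mathbb{Q}^i_\theta=\int h\cdot p^i(\theta)\, d\nu^i$ as required; taking $h\equiv 1$ confirms $\int p^i(\theta)\, d\nu^i=1$, so $p^i(\theta)$ is a bona fide density.

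Finally, for the product measure I would invoke independence of the subjects: the construction places all $W^i,\phi^i$ on a common space as independent objects, whence $\cp^1,\dots,\cp^N$ are independent and $\mathbb{Q}_{N,\theta}=\bigotimes_{i=1}^N\mathbb{Q}^i_\theta$. Because each factor satisfies $\mathbb{Q}^i_\theta\ll\nu^i$ with density $p^i(\theta)$, the standard fact that a product of absolutely continuous measures is absolutely continuous with respect to the product reference measure, with density equal to the product of the factor densities, yields $\mathbb{Q}_{N,\theta}\ll\nu_N$ and $p_N(\theta)=\prod_{i=1}^N p^i(\theta)$. I do not expect a genuine obstacle: the only delicate point is the measurability hypothesis licensing the interchange of integrals, and that is exactly what (SA2) was introduced to supply.
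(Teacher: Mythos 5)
Your proposal is correct and follows essentially the same route as the paper: the paper derives $p^i(\theta)$ by combining the fact that the conditional law of $X^i$ given $\phi^i=\varphi$ is $\mathbb{Q}^i_{\mu,\varphi}$ with Theorem \ref{conditional_likelihood} and a Fubini--Tonelli interchange whose measurability hypothesis is exactly what the standing assumption supplies, and then passes to $p_N(\theta)$ via the product structure. Your write-up merely makes explicit the steps the paper compresses into ``an immediate consequence of Fubini's theorem,'' including the useful observation that nonnegativity lets Tonelli apply without any integrability side condition.
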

We remark that the absolute continuity of $\mathbb{Q}^i_{\theta}$ w.r.t. $\nu^i$ implies that all $\nu^i$-a.s. statements made in the sequel also hold $\mathbb{Q}^i_\theta$-a.s., for all $\theta\in\Theta$.\\

\section{Asymptotic results for the MLE}
The present section deals with asymptotic properties of the MLE (consistency, asymptotic normality and time discretization) and is divided into two parts. The first part, which assumes identically distributed observations, is in spirit  close to the work of \cite{delattre2013maximum} and extends their results to a multidimensional state process with time-inhomogeneous dynamics. Given a drift function that is linear in the fixed and random effects (but possibly nonlinear in the state variable), we derive results on consistency and asymptotic normality of the MLE, following the traditional road of proof. Moreover, we leave the  theoretical setting of continuous-time observations and switch to the practical situation, in which data are only available at discrete time points. We bound the discretization bias that arises when the continuous-time statistics are replaced by their discrete-time analogues.
The second part is more general in two aspects: On the one hand, we allow the drift to be subject-specific by inclusion of covariate information and on the other hand, we suggest an alternative road to verification of consistency and asymptotic normality of the MLE, which poses less regularity assumptions and is based on the LAN property of the statistical model(s).

\subsection{Independent and identically distributed observations}\label{asymptotics_iid}
In this first subsection, we state results on consistency and asymptotic normality of the MLE when observations are independent and identically distributed. The proofs, which closely follow those in \cite{delattre2013maximum}, while, however, getting more tedious to work out due to the multidimensional setup, are omitted here, but are available upon request. The drift function in \eqref{SDE} is assumed to be linear in the fixed and random effects and random effects have a centered $d$-dimensional Gaussian distribution with unknown covariance matrix $\Omega$, such that the likelihood is explicitly available. More specifically, we consider the \\ $r$-dimensional processes $X^i$, whose dynamics are given by
\begin{align}\label{linear_SDE}
dX_t^i = \left[A(t,X_t^i) + B(t,X_t^i)\mu + C(t, X_t^i) \phi^i\right]dt + \Sigma(t,X_t^i) dW_t^i, \; X_0^i = x_0^i,\, 0\leq t\leq T,
\end{align}
and the parameter to be estimated based on the (continuous-time) observations $(X^i_t)_{0\leq t\leq T}$,  $ i=1,\ldots,N,$ is $\theta = (\mu,\Omega)$. The parameter space $\Theta$ is a bounded subset of $\R^p\times\mathfrak{S}_d(\mathbb{R})$, where $\mathfrak{S}_d(\mathbb{R})$ is the set of symmetric, positive definite $(d\times d)$-matrices. The conditional likelihood of subject $i$ is (cf. Theorem \ref{conditional_likelihood})
$q^i\left(\mu,\varphi\right)  = \e{ \tt{\mu} U_{1i} - \frac{1}{2}\tt{\mu}V_{1i}\mu + \tt{\varphi}U_{2i} - \frac{1}{2}\tt{\varphi}V_{2i}\varphi - \tt{\varphi}S_i\mu}$
with the sufficient statistics 
\small{
\begin{align*}
U_{1i}&= \int_{0}^{T}(\tt{B}\inv{\Gamma})(s,\cp^i_s)\left[d\cp^i_s -A(s,\cp_s^i)ds\right],   \quad\text{}\quad
V_{1i} = \int_0^{T}(\tt{B}\inv{\Gamma}B)(s,\cp^i_s)ds,\\
U_{2i}&= \int_{0}^{T}(\tt{C}\inv{\Gamma})(s,\cp^i_s)\left[d\cp^i_s -A(s,\cp_s^i)ds\right],   \quad\text{}\quad
V_{2i} = \int_0^{T}(\tt{C}\inv{\Gamma}C)(s,\cp^i_s)ds,\\
S_i   & = \int_0^{T}(\tt{C}\inv{\Gamma}B)(s,\cp^i_s)ds.
\end{align*}}
According to Theorem \ref{final_density}, the $N$-sample log-likelihood then turns into the explicit expression  \\$l_N(\theta) = \log\left(\prod_{i=1}^{N} p^i(\theta) \right)$, where
{\small
\begin{align*}
p^i(\theta)
&= \frac{1}{\sqrt{\text{det}(I + V_{2i}\Omega)}} \exp\left(\left[\tt{U}_{1i} - \tt{U}_{2i}R^i(\Omega) S_i\right] \mu -\frac{1}{2}\tt{\mu}\left[V_{1i} - \tt{S}_iR^i(\Omega) S_i\right] \mu + \frac{1}{2}\tt{U}_{2i}R^i(\Omega)U_{2i} \right)
\end{align*} }
and with $R^i(\Omega) = \inv{(V_{2i}+\inv{\Omega})}$.
We set $G^i(\Omega) =   \inv{\left(I + V_{2i}\Omega\right)} V_{2i}$ and  assuming that $V_{1i},V_{2i}$ are strictly positive definite, we can write 
$P^i(\theta) = G^i(\Omega)\inv{V}_{2i}(U_{2i}-S_i\mu)$, and the (vectorized) $N$-sample score function is given by
$S_N(\theta) = \sum_{i=1}^{N}S^i(\theta) = \left[ \frac{d}{d\mu}l_N(\theta), \frac{d}{d\Omega}l_N(\Omega)'\right] $, with
\begin{align*}
\frac{d}{d\mu}l_N(\theta)  &= \sum_{i=1}^{N}\left[\tt{U}_{1i}-\tt{U}_{2i}R^i(\Omega)S_i\right] - \tt{\mu} \sum_{i=1}^{N}\left[V_{1i} - \tt{S}_i R^i(\Omega)S_i\right],\\
\frac{d}{d\Omega}l_N(\theta)  &= \frac{1}{2}\sum_{i=1}^{N} \left[ -G^i(\Omega) + P^i(\theta) \tt{P^i(\theta)}\right].
\end{align*}
The MLE  $\hat{\theta}_N = (\hat{\mu}_N, \hat{\Omega}_N)$ solves the equations
\begin{equation}
\begin{aligned}
\hat{\mu}_N  &=  \inv{\left[\sum_{i=1}^{N}\left[V_{1i} - \tt{S}_i R^i(\hat{\Omega})S_i\right]\right]}\left[\sum_{i=1}^{N}\left[U_{1i}-\tt{S}_i R_i(\hat{\Omega})U_{2i}\right]\right] \\
 \sum_{i=1}^{N}&\left[-G^i(\hat{\Omega}_N) +  P^i(\hat{\theta}_N) \tt{P^i(\hat{\theta}_N)}\right] = 0.
\end{aligned}
\end{equation}

Note that the likelihood $p^i$ is explicit even if the fixed effect enters the drift nonlinearly. However, only a linear fixed effect $\mu$ leads to an explicit expression for its ML estimator $\hat{\mu}$. If one wants to impose a random variation on all components of $\mu$, one simply sets $B(t,x) = C(t,x)$. In that case, $U_{1i} = U_{2i}=:U_i$ and $V_{1i} = V_{2i} = S_i =:V_i$. The conditional likelihood simplifies to $q^i\left(\mu,\varphi\right)  = \e{ \tt{(\mu+\varphi)} U_{i} - \frac{1}{2}\tt{(\mu+\varphi)}V_{i}(\mu+\varphi)}$ and the unconditional likelihood to 
\begin{align}\label{linear_like}
p^i(\theta)& = \frac{1}{\sqrt{\text{det}(I + V_i\Omega)}} \exp\left(-\frac{1}{2}\tt{(\mu-\inv{V}_i U_i)}G^i(\Omega) (\mu - \inv{V}_i U_i)\right) \exp\left(\frac{1}{2}\tt{U}_i \inv{V}_i U_i\right).
\end{align}
With  $\gamma^i(\theta) = G^i(\Omega)(\inv{V}_iU_i-\mu)$ (and clearly, $\gamma^i(\theta) = P^i(\theta)$), the MLE  $\hat{\theta}_N = (\hat{\mu}_N, \hat{\Omega}_N)$ then solves
\begin{equation}
\begin{aligned}\label{MLE_system}
\hat{\mu}_N  &=  \inv{\left[\sum_{i=1}^{N}  G^i(\hat{\Omega}_N)\right]}\left[\sum_{i=1}^{N} \inv{(I + V_i\hat{\Omega}_N)}U_{i}\right] \\ \sum_{i=1}^{N}&\left[-G^i(\hat{\Omega}_N) +  \gamma^i(\hat{\theta}_N) \tt{\gamma^i(\hat{\theta}_N)}\right] = 0.
\end{aligned}
\end{equation} 
To simplify the subsequent outline, we impose random effects on all components of $\mu$, such that the dynamics are
$dX_t^i = \left[A(t,X_t^i) + C(t, X_t^i) (\mu+\phi^i)\right]dt + \Sigma(t,X_t^i) dW_t^i$  and 
$p^i(\theta)$ is given by \eqref{linear_like}.

\begin{lemma}[Moment properties]\label{finite_moments}
	For all $\theta \in \Theta$ the following statements hold.
	\begin{enumerate}		
		\item[(i)]  For  all $u\in\R^d$,	$\mathbb{E}_\theta\left(\exp\left(\tt{u}\inv{(I + V_i\Omega )}U_i\right)\right) <\infty.$
		\item[(ii)] $ \mathbb{E}_\theta\left(\gamma^i(\theta)\right) = 0, \quad	\mathbb{E}^i_\theta\left(\gamma^i(\theta)\tt{\gamma^i(\theta)} -G^i(\Omega)\right) = 0.$
	\end{enumerate}
\end{lemma}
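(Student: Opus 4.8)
The plan is to prove (i) first and then feed its integrability into (ii). Throughout write $G := G^i(\Omega)$ and $y := \inv{V_i}U_i$, both functionals of the path $X^i$. Two algebraic facts drive everything. First, $G^i(\Omega) = \inv{(\inv{V_i}+\Omega)}$ is symmetric and satisfies $0\preceq G^i(\Omega)\preceq\inv\Omega$ in the Loewner order (since $\inv{V_i}+\Omega\succeq\Omega$), so its eigenvalues are bounded by those of $\inv\Omega$ \emph{uniformly in the path}. Second, $\inv{(I+V_i\Omega)}U_i = \inv{(I+V_i\Omega)}V_i\,y = G^i(\Omega)\,y$. For intuition (and to see why a direct attack on (ii) is awkward), note that conditionally on $\phi^i=\varphi$ one has $U_i = V_i(\mu+\varphi)+M_i$, where $M_i := \int_0^{T}(\tt C\inv\Gamma\Sigma)(s,\cp^i_s)\,dW^i_s$ is a martingale with quadratic variation $\langle M_i\rangle_{T} = V_i$.

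For (i), I would avoid bounding the exponential moment of the martingale part directly — that would require exponential integrability of $V_i$, which (SA1) does not supply — and instead exploit the explicit density \eqref{linear_like} through a parameter shift. Completing the square in $\mu$ in \eqref{linear_like}, using $\tt u\inv{(I+V_i\Omega)}U_i = \tt u G y$ and the pointwise identity $-\tfrac12\tt{(\mu-y)}G(\mu-y)+\tt u G y = -\tfrac12\tt{(\mu+u-y)}G(\mu+u-y)+\tt u G\mu+\tfrac12\tt u G u$, yields the pathwise relation
\[
\exp\!\big(\tt u\,\inv{(I+V_i\Omega)}U_i\big)\,p^i(\mu,\Omega)
= \exp\!\big(\tt u\, G^i(\Omega)\mu + \tfrac12\tt u\, G^i(\Omega)u\big)\,p^i(\mu+u,\Omega).
\]
The prefactor is bounded by the path-independent constant $K := \exp\!\big(\babs{\inv\Omega}\,\abs u\,\abs\mu + \tfrac12\tt u\inv\Omega u\big)$ because $0\preceq G^i(\Omega)\preceq\inv\Omega$. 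Taking $\mathbb{E}_{\nu^i}$, using $\mathbb{E}_\theta[\,\cdot\,]=\mathbb{E}_{\nu^i}[\,\cdot\,p^i(\mu,\Omega)]$ together with $\mathbb{E}_{\nu^i}[p^i(\mu+u,\Omega)]=1$ — valid for every $\mu+u\in\R^p$, since $p^i(\mu+u,\Omega)$ is a genuine $\nu^i$-density by Theorem \ref{final_density} whether or not $\mu+u\in\Theta$ — gives $\mathbb{E}_\theta[\exp(\tt u\inv{(I+V_i\Omega)}U_i)]\leq K<\infty$. The crux here is recognising this shift identity, which sidesteps the exponential-moment estimate entirely.

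For (ii), I would invoke the mean-zero property of the score. Differentiating $\log p^i$ from \eqref{linear_like} gives $\partial_\mu\log p^i(\theta) = G^i(\Omega)(\inv{V_i}U_i-\mu) = \gamma^i(\theta)$, while the $\Omega$-derivative recorded as a summand of $\frac{d}{d\Omega}l_N(\theta)$ above reads $\partial_\Omega\log p^i(\theta) = \tfrac12\big(\gamma^i(\theta)\tt{\gamma^i(\theta)}-G^i(\Omega)\big)$. Thus both claims in (ii) are the single identity $\mathbb{E}_\theta[\partial_\theta\log p^i(\theta)]=0$, which I obtain in the standard way from $\mathbb{E}_\theta[\partial_\theta\log p^i] = \int\partial_\theta p^i\,d\nu^i = \partial_\theta\!\int p^i\,d\nu^i = 0$. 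The only genuine work is justifying the interchange of differentiation and integration, i.e.\ producing a $\nu^i$-integrable dominating function for $\partial_\theta p^i = (\partial_\theta\log p^i)\,p^i$ locally uniformly in $\theta$. This is exactly where (i) is used: it delivers all polynomial moments of $\inv{(I+V_i\Omega)}U_i = G^i(\Omega)y$, hence of $\gamma^i$ and $\gamma^i\tt{\gamma^i}$ under $\mathbb{E}_\theta$, while $G^i(\Omega)\preceq\inv\Omega$ (and its analogue at neighbouring parameters) keeps these bounds uniform over a compact neighbourhood of $\theta$ and controls the $\partial_\Omega\log\det(I+V_i\Omega)$ term. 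The main obstacle is therefore twofold but rests on one device — replacing the intractable martingale estimate in (i) by the parameter-shift identity, and then using the resulting integrability to license differentiation under the integral in (ii); a purely probabilistic evaluation of $\mathbb{E}_\theta[\gamma^i]$ is less attractive precisely because $M_i$ and its quadratic variation $V_i$ (entering through $G^i(\Omega)$) are correlated.
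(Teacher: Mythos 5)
Your proof is correct, and it is essentially the argument this paper relies on: the paper omits the proof, stating it closely follows \cite{delattre2013maximum}, and that reference establishes (i) by exactly your parameter-shift identity $\exp(\tt{u}\inv{(I+V_i\Omega)}U_i)\,p^i(\mu,\Omega) = C(u,\mu,\Omega)\,p^i(\mu+u,\Omega)$ with the prefactor bounded via $G^i(\Omega)\preceq\inv{\Omega}$, and (ii) by recognising $\gamma^i(\theta)$ and $\tfrac12(\gamma^i(\theta)\tt{\gamma^i(\theta)}-G^i(\Omega))$ as the components of the score and differentiating $\int p^i\,d\nu^i=1$ under the integral sign, justified by the exponential moments from (i). Your multivariate algebra (the completion of the square, $G^i(\Omega)=\inv{(\inv{V_i}+\Omega)}$, and the Loewner bound) checks out, so this is the same route, just written out in the matrix setting.
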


The asymptotic normality of the normalized score function is an immediate consequence of the law of large numbers  together with the standard multivariate central limit theorem.

\begin{thm}[Asymptotic normality of the normalized score function]\label{convScore}
	For all $\theta\in\Theta$, under $\mathbb{Q}_\theta$ and as $N$ tends to infinity, the (vectorized  version of) the normalized score function $\sqrt{N}S_N(\theta)$ 
	converges in distribution to $\mathcal{N}(0,\mathcal{I}(\theta))$, where $\mathcal{I}(\theta)$ is the covariance matrix of 
	$\text{vec}(S^i(\theta)),$ with $S^i(\theta) =  \left[
	\gamma^i(\theta), \; \frac{1}{2}\left(\gamma^i(\theta)\tt{\gamma_i(\theta)} - G^i(\Omega)\right)
	\right]$.
\end{thm}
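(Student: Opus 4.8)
The plan is to recognise $\sqrt{N}S_N(\theta)$ as the normalized sum $N^{-1/2}\sum_{i=1}^N\text{vec}(S^i(\theta))$ of independent, identically distributed, centered random vectors and then to invoke the standard multivariate central limit theorem. Under $\mathbb{Q}_\theta$ the paths $X^1,\ldots,X^N$ are i.i.d.\ (common horizon $T$, i.i.d.\ initial conditions $x_0^i$, i.i.d.\ random effects $\phi^i$, independent driving noises $W^i$), and each score $\text{vec}(S^i(\theta))$ is a fixed measurable function of $X^i$ through the sufficient statistics $U_i,V_i$; hence $(\text{vec}(S^i(\theta)))_{i\geq1}$ is i.i.d. Centering of a single summand is precisely Lemma~\ref{finite_moments}(ii): $\mathbb{E}_\theta(\gamma^i(\theta))=0$ for the $\mu$-block and $\mathbb{E}_\theta(\gamma^i(\theta)\tt{\gamma^i(\theta)}-G^i(\Omega))=0$ for the $\Omega$-block, so $\mathbb{E}_\theta(\text{vec}(S^i(\theta)))=0$.

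The substantive step is to verify that $\mathcal{I}(\theta)=\Cov{\text{vec}(S^i(\theta))}$ is finite, i.e.\ that $\text{vec}(S^i(\theta))$ is square-integrable. Since the $\Omega$-block is quadratic in $\gamma^i(\theta)$, this reduces to showing that $\gamma^i(\theta)$ has a finite fourth moment. Here I would use the identity $G^i(\Omega)=\inv{(\inv{V_i}+\Omega)}\preceq\inv{\Omega}$, which bounds all $G^i(\Omega)$ uniformly in spectral norm, together with the representation $\gamma^i(\theta)=\inv{(I+V_i\Omega)}U_i-G^i(\Omega)\mu$. The term $G^i(\Omega)\mu$ is then deterministically bounded, while Lemma~\ref{finite_moments}(i) supplies the finite exponential moment $\mathbb{E}_\theta(\exp(\tt{u}\inv{(I+V_i\Omega)}U_i))<\infty$ for every $u\in\R^d$. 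Finiteness of the moment generating function on all of $\R^d$ forces every polynomial moment of $\inv{(I+V_i\Omega)}U_i$ to be finite, in particular the fourth, whence $\mathbb{E}_\theta(\abs{\gamma^i(\theta)}^4)<\infty$ and $\mathcal{I}(\theta)$ is well defined and finite.

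With the summands i.i.d., centered, and square-integrable, the standard multivariate central limit theorem yields $N^{-1/2}\sum_{i=1}^N\text{vec}(S^i(\theta))\Rightarrow\mathcal{N}(0,\mathcal{I}(\theta))$ in the appropriate Euclidean space (of dimension $d+d(d+1)/2$ once the symmetry of the $\Omega$-block is taken into account), the law of large numbers confirming that the limiting covariance coincides with the population covariance $\mathcal{I}(\theta)$ of a single $\text{vec}(S^i(\theta))$. The only genuine obstacle is the moment bound above---promoting the exponential-moment control of Lemma~\ref{finite_moments}(i) to fourth-moment control of $\gamma^i(\theta)$; the rest is bookkeeping, namely stacking the vector- and symmetric-matrix-valued blocks of $S^i(\theta)$ into one vector and respecting that symmetry when assembling $\mathcal{I}(\theta)$.
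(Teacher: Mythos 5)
Your proposal is correct and follows essentially the same route as the paper, which presents this result as an immediate consequence of the multivariate central limit theorem applied to the i.i.d.\ centered scores $S^i(\theta)$, with centering supplied by Lemma~\ref{finite_moments}(ii). Your promotion of the exponential-moment bound in Lemma~\ref{finite_moments}(i) to fourth-moment control of $\gamma^i(\theta)$ (via the uniform bound $\babs{G^i(\Omega)}\leq\babs{\inv{\Omega}}$) is precisely the detail needed to make the omitted argument rigorous.
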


\begin{standAss}\label{assD}
	\mbox{}
	\begin{enumerate}
		\item 
		The function $x\mapsto (C'\inv{\Gamma}C)(s,x):\R^r \rightarrow \R^{d\times d}$ is not constant and under $\mathbb{Q}^i_{\varphi_0}$ the $\R^{d\times (d+1)}$-valued random variable 	$(U_i,V_i)$	admits a continuous density function  (w.r.t. the Lebesgue measure), which is positive on an open ball of $\Theta\subset\R^d\times\R^{d\times d}$. 
		\item $\Theta$ is convex and compact. In particular, we assume that there are positive constants $M_1, M_{2,1}$, $M_{2,2}$ such that for all $\theta\in\Theta$: $\abs{\mu}\leq M_1$ and $M_{2,1}\leq \babs{\Omega}\leq M_{2,2}$. 
		\item 
		The true value $\theta_0$ belongs to int$(\Theta)$ and 	the matrix $\mathcal{I}(\theta_0)$ is invertible. 
	\end{enumerate}
\end{standAss}

\begin{thm}[Continuity of KL information and uniqueness of its minimum]\label{KL}
\mbox{}\\
	Let $K(\mathbb{Q}^i_{\theta_0}, \mathbb{Q}^i_\theta) = \mathbb{E}_{\theta_0}(\log p^i(\theta_0)-\log p^i(\theta))$ be the Kullback-Leibler information of $\mathbb{Q}^i_{\theta_0}$ w.r.t. $\mathbb{Q}^i_{\theta}$. 
	Then the function $\theta\mapsto K(\mathbb{Q}^i_{\theta_0}, \mathbb{Q}^i_\theta)$ is continuous  and has a unique minimum at $\theta = \theta_0$.
\end{thm}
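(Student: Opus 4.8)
I would treat the two assertions separately, reducing both to properties of the explicit density \eqref{linear_like}, whose logarithm I first put into a form free of the delicate factor $\inv{V}_i$. Expanding the quadratic form in \eqref{linear_like} and using $G^i(\Omega)\inv{V}_i = \inv{(I+V_i\Omega)}$ together with $\inv{V}_i\inv{(I+V_i\Omega)}V_i = \inv{(I+\Omega V_i)}$, the term $\tfrac12\tt{U}_i\inv{V}_iU_i$ cancels and one is left with
\[
\log p^i(\theta) = -\tfrac12\log\det(I+V_i\Omega) - \tfrac12\tt{\mu}G^i(\Omega)\mu + \tt{\mu}\inv{(I+V_i\Omega)}U_i + \tfrac12\tt{U}_i\,\Omega\inv{(I+V_i\Omega)}U_i,
\]
a degree-two polynomial in $U_i$ whose matrix coefficients $G^i(\Omega)$, $\inv{(I+V_i\Omega)}$ and $\Omega\inv{(I+V_i\Omega)} = \inv{(\inv{\Omega}+V_i)}$ are well defined for every symmetric positive definite $\Omega$ and every $V_i\succeq0$. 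Since $V_i\Omega$ is similar to the positive semidefinite $\Omega^{1/2}V_i\Omega^{1/2}$, its eigenvalues are nonnegative, so that, uniformly over $\theta\in\Theta$, $\babs{\inv{(I+V_i\Omega)}}\leq1$, $\babs{\Omega\inv{(I+V_i\Omega)}}\leq\babs{\Omega}\leq M_{2,2}$, $\babs{G^i(\Omega)}\leq\babs{V_i}$ and $0\leq\log\det(I+V_i\Omega)\leq d\log(1+M_{2,2}\babs{V_i})$.

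\textbf{Continuity.} For $\nu^i$-almost every path the map $\theta\mapsto\log p^i(\theta)$ is continuous (indeed smooth), since all matrix operations above are continuous on the relevant domain. Combining the uniform coefficient bounds with $\abs{\mu}\leq M_1$ yields $\sup_{\theta\in\Theta}\rabs{\log p^i(\theta)}\leq C(1+\babs{V_i}+\abs{U_i}^2)$ for a path-independent constant $C$. It then suffices that $\mathbb{E}_{\theta_0}(\babs{V_i})$ and $\mathbb{E}_{\theta_0}(\abs{U_i}^2)$ be finite; this follows from (SA1) and the growth of the integrands defining $U_i,V_i$, using the Burkholder--Davis--Gundy inequality for the stochastic integral in $U_i$, in the same way as the moment estimates underlying Lemma \ref{finite_moments}. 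Dominated convergence then gives continuity of $\theta\mapsto\mathbb{E}_{\theta_0}(\log p^i(\theta))$, hence of $\theta\mapsto K(\mathbb{Q}^i_{\theta_0},\mathbb{Q}^i_\theta)$.

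\textbf{Minimum and uniqueness.} That $\theta_0$ is a minimiser is the information inequality: writing $K=\mathbb{E}_{\theta_0}(-\log(p^i(\theta)/p^i(\theta_0)))$ and applying Jensen to the convex $-\log$ gives $K\geq-\log\mathbb{E}_{\theta_0}(p^i(\theta)/p^i(\theta_0))=-\log\int p^i(\theta)\,d\nu^i=0=K(\mathbb{Q}^i_{\theta_0},\mathbb{Q}^i_{\theta_0})$, with equality iff $p^i(\theta)=p^i(\theta_0)$ holds $\mathbb{Q}^i_{\theta_0}$-a.s. Uniqueness is thus identifiability, and here I would invoke Standing Assumption \ref{assD}(1): because $(U_i,V_i)$ has a continuous density that is positive on an open ball $B$, and $\log p^i(\theta)-\log p^i(\theta_0)$ is a continuous function of $(U_i,V_i)$, the a.s. identity forces $\log p^i(\theta)=\log p^i(\theta_0)$ pointwise on $B$. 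Reading the display as a quadratic in $u=U_i$ for a fixed $v=V_i$ in the slice of $B$ and matching coefficients over an open range of $u$, the quadratic term gives $\inv{(\inv{\Omega}+v)}=\inv{(\inv{\Omega_0}+v)}$, hence $\Omega=\Omega_0$; the linear term then gives $\inv{(I+\Omega v)}(\mu-\mu_0)=0$, and invertibility of $\inv{(I+\Omega v)}$ forces $\mu=\mu_0$. Thus $\theta=\theta_0$. The hypothesis that $x\mapsto(\tt{C}\inv{\Gamma}C)(s,x)$ be non-constant is what keeps $V_i$ genuinely random, making the density assumption on $(U_i,V_i)$ tenable.

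\textbf{Main obstacle.} The conceptual crux is the identifiability step, which hinges entirely on translating an almost-sure equality into a pointwise identity on an open set (via the positive density of $(U_i,V_i)$) so that the quadratic coefficients can be compared. The remaining difficulty is bookkeeping rather than conceptual: verifying the uniform integrable domination, i.e. the moment bounds $\mathbb{E}_{\theta_0}(\babs{V_i})<\infty$ and $\mathbb{E}_{\theta_0}(\abs{U_i}^2)<\infty$, which is tedious in the multidimensional, time-inhomogeneous setting but proceeds exactly as in the i.i.d.\ arguments following \cite{delattre2013maximum}.
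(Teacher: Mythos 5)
Your proposal is correct and follows essentially the same route as the paper's (omitted) proof, which the authors state closely follows \cite{delattre2013maximum}: dominated convergence with moment bounds on $(U_i,V_i)$ for continuity, the information inequality for the minimum, and identifiability obtained from Standing Assumption \ref{assD}(1) by passing from the a.s.\ equality $p^i(\theta)=p^i(\theta_0)$ to a pointwise identity on an open set and matching the polynomial coefficients in $U_i$. The only points worth tightening are cosmetic: the a.s.\ equality holds under $\mathbb{Q}^i_{\theta_0}$ while the density assumption is stated under $\nu^i$ (the two measures are equivalent since $p^i(\theta_0)>0$), and the bound $\babs{(I+V_i\Omega)^{-1}}\leq 1$ is an eigenvalue statement that should be converted to a norm bound uniform over the compact $\Theta$.
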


\begin{thm}[Weak consistency and asymptotic normality of the MLE]\label{consistency}
	Let $\hat{\theta}_N$ be an ML estimator defined as any solution of $l_N(\hat{\theta}_N) = \sup_{\theta\in\Theta}l_N(\theta)$. Then, as $N\rightarrow\infty$, $\hat{\theta}_N$ converges to $\theta_0$ in $\mathbb{Q}_{\theta_0}$- probability, and $\sqrt{N}(\hat{\theta}_N-\theta_0)\rightarrow\mathcal{N}\left(0, \mathcal{I}^{-1}(\theta_0)\right)$  under $\mathbb{Q}_{\theta_0}$.
\end{thm}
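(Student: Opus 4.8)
The plan is to follow the traditional route for maximum-likelihood asymptotics: first obtain weak consistency from a uniform law of large numbers for the normalized log-likelihood together with the identifiability furnished by Theorem \ref{KL}, and then derive asymptotic normality by a Taylor expansion of the score around $\theta_0$, feeding in the score central limit theorem \ref{convScore} and a law of large numbers for the Hessian.

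\textbf{Consistency.} I would show that, in $\mathbb{Q}_{\theta_0}$-probability,
$$\sup_{\theta\in\Theta}\left|\frac{1}{N}\bigl(l_N(\theta_0)-l_N(\theta)\bigr)-K(\mathbb{Q}^i_{\theta_0},\mathbb{Q}^i_\theta)\right|\longrightarrow 0 .$$
The pointwise convergence is the ordinary law of large numbers, since $\frac{1}{N}(l_N(\theta_0)-l_N(\theta))=\frac1N\sum_{i=1}^{N}\log\bigl(p^i(\theta_0)/p^i(\theta)\bigr)$ has $\mathbb{Q}_{\theta_0}$-mean $K(\mathbb{Q}^i_{\theta_0},\mathbb{Q}^i_\theta)$ by definition and the summands are i.i.d. Upgrading to uniform convergence requires a uniform law of large numbers over the compact set $\Theta$ (Standing Assumption \ref{assD}); here I would verify continuity of $\theta\mapsto\log p^i(\theta)$ and the existence of an integrable envelope, so that a standard Lipschitz/bracketing argument applies. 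By Theorem \ref{KL} the limit $\theta\mapsto K(\mathbb{Q}^i_{\theta_0},\mathbb{Q}^i_\theta)$ is continuous with a unique minimum $0$ at $\theta_0$; combined with compactness and the uniform convergence, the argmax theorem for M-estimators yields $\hat{\theta}_N\to\theta_0$ in $\mathbb{Q}_{\theta_0}$-probability.

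\textbf{Asymptotic normality.} Since $\theta_0\in\operatorname{int}(\Theta)$ (Standing Assumption \ref{assD}) and $\hat{\theta}_N\to\theta_0$, with $\mathbb{Q}_{\theta_0}$-probability tending to one the maximizer is interior and solves $S_N(\hat{\theta}_N)=0$. A first-order Taylor expansion of the vectorized score about $\theta_0$ gives
$$0=S_N(\hat{\theta}_N)=S_N(\theta_0)+\overline{H}_N\,(\hat{\theta}_N-\theta_0),\qquad \overline{H}_N=\int_0^1\nabla S_N\bigl(\theta_0+t(\hat{\theta}_N-\theta_0)\bigr)\,dt,$$
so that on that event $\sqrt{N}(\hat{\theta}_N-\theta_0)=-\bigl(\tfrac1N\overline{H}_N\bigr)^{-1}\sqrt{N}\,S_N(\theta_0)$. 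By Theorem \ref{convScore}, $\sqrt{N}\,S_N(\theta_0)\to\mathcal{N}(0,\mathcal{I}(\theta_0))$, which is legitimate because $\mathbb{E}_{\theta_0}S^i(\theta_0)=0$ by Lemma \ref{finite_moments}(ii). For the averaged Hessian I would invoke a second uniform law of large numbers for $\frac1N\nabla S_N(\theta)$ on $\Theta$, combine it with $\theta_0+t(\hat{\theta}_N-\theta_0)\to\theta_0$ and continuity of the limit, and use the information identity $\mathbb{E}_{\theta_0}\nabla S^i(\theta_0)=-\mathcal{I}(\theta_0)$ (obtained by differentiating the mean-zero identities of Lemma \ref{finite_moments}(ii)); this gives $\tfrac1N\overline{H}_N\to-\mathcal{I}(\theta_0)$ in probability. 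As $\mathcal{I}(\theta_0)$ is invertible (Standing Assumption \ref{assD}), Slutsky's lemma delivers $\sqrt{N}(\hat{\theta}_N-\theta_0)\to\mathcal{N}(0,\mathcal{I}^{-1}(\theta_0))$.

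\textbf{Main obstacle.} The conceptually standard but technically heavy part is establishing the two uniform laws of large numbers over the matrix parameter $\Omega\in\mathfrak{S}_d(\mathbb{R})$, i.e. producing integrable envelopes for the log-likelihood, the score and the Hessian that are uniform in $\theta$. I would exploit that $M_{2,1}\leq\babs{\Omega}\leq M_{2,2}$ keeps $\Omega$ bounded and bounded away from singularity, so that $G^i(\Omega)=\inv{(I+V_i\Omega)}V_i$ and $\inv{(I+V_i\Omega)}$ have norms controlled uniformly in $\theta$ by the positive-definiteness of $V_i$; writing $\gamma^i(\theta)=\inv{(I+V_i\Omega)}(U_i-V_i\mu)$ then reduces all moment control to the exponential moment of $\inv{(I+V_i\Omega)}U_i$ supplied by Lemma \ref{finite_moments}(i) together with the finiteness of all moments of $V_i$ (from the moment bounds on $X^i$ in the Standing Assumptions). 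The genuine difficulty is bookkeeping rather than conceptual: the derivatives with respect to the symmetric matrix $\Omega$ must be vectorized consistently and the resulting quadratic-form expressions are long, which is precisely the tedium the authors allude to; no new phenomenon beyond the univariate argument of \cite{delattre2013maximum} appears.
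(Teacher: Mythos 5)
Your proposal is correct and takes essentially the same approach as the paper: the authors omit the proof but state that it follows the standard techniques of \cite{delattre2013maximum}, which is precisely your classical route --- consistency from a uniform law of large numbers combined with the Kullback--Leibler identifiability of Theorem \ref{KL} over the compact $\Theta$, and asymptotic normality from a Taylor expansion of the score around the interior point $\theta_0$, the score CLT of Theorem \ref{convScore}, convergence of the averaged Hessian to $-\mathcal{I}(\theta_0)$ via the Bartlett identity, and Slutsky. The technical bookkeeping you flag (envelopes uniform in $\theta$ using $M_{2,1}\leq\babs{\Omega}\leq M_{2,2}$, the exponential moment of Lemma \ref{finite_moments}(i), and vectorized derivatives in the matrix parameter $\Omega$) is exactly the tedium the authors cite as their reason for omitting the written proof.
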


\subsubsection{Discrete data}
So far, we have assumed that we observe the entire paths $(X^i_t)_{0\leq t\leq T}$ of the processes generated by \eqref{SDE}. 
This is a severe restriction as in practice, observations are only available at discrete time points $t_{0},\ldots,t_{n}$. A natural approach is to replace the  continuous-time integrals in $q^i(\theta)$ by discrete-time approximations and to derive an approximate MLE based on the resulting approximate likelihood. For instance, the stochastic integral term in $q^i(\theta)$, which is an expression of the form $\int_{t_k}^{t_{k+1}} h(s,X^i_s)dX^i_s$, may be replaced by a first-order approximation 
$\int_{t_k}^{t_{k+1}} h(s,X^i_s) dX^i_s \approx h(t_k, X^i_k) \Delta X^i_k$
or, by a higher-order approximation using Ito's formula,   giving
$\int_{t_k}^{t_{k+1}}  h(s,X^i_s) dX^i_s \approx H(t_{k+1},X^i_{k+1}) - H(t_k, X^i_k) -\frac{\Delta t}{2}\int_{t_k}^{t_{k+1}}\sum_{j,l=1}^{r}(H_{x_j,x_l}\Sigma_j\Sigma_l')(t_k,X_k^i),$
where $h(t,x) = \nabla_x H(t,x)$. Note that this requires $h$ to be of gradient-type, i.e., it requires the existence of a differentiable function $H$ such that $h$ can be obtained as $h(t,x) = \nabla_x H(t,x)$. A higher-order approximation scheme is preferable, if the time step is not sufficiently small (non-high-frequency data) and/or the dynamics are highly non-linear. 
In the linear model \eqref{linear_SDE}, the  first-order approximation of the continuous-time likelihood (which breaks down to the discretization of the sufficient statistics $U_i,V_i$) corresponds to the exact likelihood of its Euler scheme approximation. In particular, if we assume for simplicity that we observe all individuals at time points $t_i = \frac{i}{n}$ and denote by $U_i^n, V^n_i$ the first-order discrete-time approximations to the continuous-time statistics $U_i,V_i$, one has the following result: 

\begin{thm}
	Assume model \eqref{linear_SDE} and suppose that $A(t,x), (\tt{C}\inv{\Gamma}C)(t,x)$ and $(\tt{C}\inv{\Gamma})(t,x)$ are globally Lipschitz-continuous in $t$ and $x$ and that in addition to $A(t,x), C(t,x)$ and $\Sigma(t,x)$ also $(\tt{C}\inv{\Gamma})(t,x)$ is of sublinear growth in $x$, uniformly in $t$. 
	Then for any $p\geq 1$, the error behaves like\\ $\mathbb{E}_{\theta}\left(\babs{V_{i} - V_{i}^n}^p + \abs{U_{i}-U_{i}^n}^p \right) =  O(n^{-p/2})$.
\end{thm}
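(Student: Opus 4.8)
The plan is to freeze each matrix-valued integrand at the left grid-point of its cell and to control the resulting error types separately in $L^p$. Writing $\Delta t = 1/n$ and letting $\eta(s)=t_k$ for $s\in[t_k,t_{k+1})$ denote the grid-floor, the $V$-error is of pure Lebesgue type,
\begin{align*}
V_i - V_i^n = \sum_{k=0}^{n-1}\int_{t_k}^{t_{k+1}}\left[(\tt{C}\inv{\Gamma}C)(s,\cp^i_s) - (\tt{C}\inv{\Gamma}C)(\eta(s),\cp^i_{\eta(s)})\right]ds,
\end{align*}
whereas $U_i - U_i^n$ is the sum of the stochastic-integral error $\int_0^T[(\tt{C}\inv{\Gamma})(s,\cp_s^i)-(\tt{C}\inv{\Gamma})(\eta(s),\cp_{\eta(s)}^i)]\,d\cp_s^i$ and the Riemann error of the $(\tt{C}\inv{\Gamma})A$-integral. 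Substituting $d\cp_s^i = [A+C(\mu+\phi^i)](s,\cp_s^i)\,ds + \Sigma(s,\cp_s^i)\,dW_s^i$ recasts all of $U_i - U_i^n$ as (a) Lebesgue-type errors, whose integrands are products of the Lipschitz, sublinearly growing maps $\tt{C}\inv{\Gamma},A,C$, plus (b) one martingale error $\int_0^T[(\tt{C}\inv{\Gamma})(s,\cp_s^i)-(\tt{C}\inv{\Gamma})(\eta(s),\cp_{\eta(s)}^i)]\Sigma(s,\cp_s^i)\,dW_s^i$. By Lyapunov's inequality it suffices to treat $p\geq 2$.

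The workhorse will be the one-step increment bound. Conditionally on $\phi^i=\varphi$, the process $\cp^i$ has the law of the solution to \eqref{SDEvarphi}, whose coefficients grow sublinearly, so the Burkholder--Davis--Gundy inequality and the moment bounds of (SA1) give, for every $m\geq 1$ and $s\in[t_k,t_{k+1}]$,
\begin{align*}
\mathbb{E}_{\mu,\varphi}\left(\abs{\cp_s^i - \cp_{t_k}^i}^{m}\right) \leq c_m(\varphi)\,(\Delta t)^{m/2},
\end{align*}
with $c_m(\varphi)$ growing at most polynomially in $\varphi$ (through the drift constant $\abs{\mu+\varphi}$). Integrating over the Gaussian law of $\phi^i$ then yields $\mathbb{E}_\theta(\abs{\cp_s^i-\cp_{t_k}^i}^m)\leq C_m(\Delta t)^{m/2}$, and the uniform moments $\sup_{0\leq s\leq T}\mathbb{E}_\theta(\abs{\cp_s^i}^m)<\infty$ follow the same way.

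For the $V$-error I would use the global Lipschitz continuity of $\tt{C}\inv{\Gamma}C$ to bound each cell increment by $L(\abs{s-t_k}+\abs{\cp_s^i-\cp_{t_k}^i})$; Minkowski's inequality and the increment bound with $m=p$ then give an $L^p$-norm of order $(\Delta t)^{3/2}$ per cell, hence $n\cdot(\Delta t)^{3/2}=O(n^{-1/2})$ overall. The Lebesgue errors (a) in $U$ are handled identically, except that their integrands are only locally Lipschitz with a linearly growing constant, so I would insert a Cauchy--Schwarz step separating the increment factor (estimated by the bound with $m=2p$) from the polynomially growing factor (finite moments by (SA1)); the rate is again $O(n^{-1/2})$. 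For the martingale error (b) I would apply BDG to the whole integral and then Jensen (using $T=1$) to reach $\int_0^T\mathbb{E}_\theta(\abs{[(\tt{C}\inv{\Gamma})(s,\cp_s^i)-(\tt{C}\inv{\Gamma})(\eta(s),\cp_{\eta(s)}^i)]\Sigma(s,\cp_s^i)}^p)\,ds$, and a further Cauchy--Schwarz would split the Lipschitz increment of $\tt{C}\inv{\Gamma}$ (contributing $(\Delta t)^{p/2}$ via the bound with $m=2p$) from the sublinearly growing $\Sigma$ (bounded moments). The integrand is then $O((\Delta t)^{p/2})$ uniformly in $s$, so integration gives $\mathbb{E}_\theta(\abs{\cdot}^p)=O(n^{-p/2})$. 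Finally $(a+b)^p\leq 2^{p-1}(a^p+b^p)$ combines the three contributions into the claimed bound.

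I expect the martingale term (b) to be the main obstacle: one must propagate the sublinear growth of $\tt{C}\inv{\Gamma}$ and $\Sigma$ through BDG while keeping every invoked moment finite, which is exactly why the increment bound is needed at arbitrarily high order $m$ and why (SA1) is imposed for all $k\in\N$. The second delicate point is that all estimates are obtained conditionally on $\phi^i=\varphi$ with $\varphi$-dependent constants, and it is only the finiteness of all Gaussian moments of $\phi^i$ that lets these be integrated back to statements under $\mathbb{E}_\theta$.
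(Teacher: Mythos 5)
Your proposal is correct and takes essentially the same route as the paper's own (omitted) proof, which the paper states follows the techniques of \cite{delattre2013maximum}: split $V_i-V_i^n$ and $U_i-U_i^n$ into Lebesgue-type and martingale-type errors by substituting the SDE for $dX^i_s$, control one-step increments via $\mathbb{E}_\theta\left(\left|X^i_s-X^i_{t_k}\right|^m\right)\leq C_m n^{-m/2}$, and combine the Burkholder--Davis--Gundy inequality, Cauchy--Schwarz and finiteness of all moments to reach the $O(n^{-p/2})$ rate. One imprecision to fix: conditionally on $\phi^i=\varphi$, the Gronwall-type constants grow exponentially in $\left|\varphi\right|$ (with linear exponent, since only the drift depends on $\varphi$), not polynomially as you assert -- this does not break the argument, because such constants are still integrable against the Gaussian law of $\phi^i$, and it can be avoided entirely by working unconditionally and separating the $\left|\mu+\phi^i\right|$-factors from the $\left|X^i\right|$-factors via Cauchy--Schwarz, using (SA1) together with the Gaussian moments of $\phi^i$.
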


\subsection{Non-homogeneous observations and covariates}\label{asymptotics_noniid}
In this section, we consider the asymptotic behavior of the MLE when the observations \\$X^i = (X^i_t)_{0\leq t\leq T^i}$, $i=1,\ldots,N,$  are independent between subjects $i$, but not necessarily identically distributed. This  occurs, for instance, if the drift contains subject-specific covariate information $D^i$ and these covariates are not i.i.d. If they are assumed to be deterministic,  as in standard regression,  the drift function $F$ varies to a certain degree across subjects, $F^i(t,x,\mu,\phi^i) = F(t,x,D^i, \mu,\phi^i)$. As in the i.i.d. case, one would naturally wonder, under which conditions on the degree of variation among the $F^i$ the derived MLEs still satisfy standard asymptotic results and, equally important, how to verify conditions that assure a regular asymptotic behavior. \\
Results on asymptotic normality of MLEs for independent, not identically distributed (i.n.i.d.) random variables are well-known \citep{bradley1962asymptotic,hoadley1971asymptotic}. They commonly built upon regularity conditions  on the density functions $p^i$, such as third-order differentiability and boundedness of the derivatives, to ensure that integration and differentiation can be interchanged  (as in the i.i.d. case, see also subsection \ref{asymptotics_iid}). To achieve a limiting behavior when the observations do not share a common distribution, the variation across these non-homogeneous distributions has to be  controlled. This is usually achieved by, on the one hand, imposing that the family of score functions $\{S^i(\theta); i\in\mathbb{N}\}$  satisfies the Lindeberg condition (a condition that bounds the variation of each $S^i(\theta)$ in relation to the total variation of the $N$-sample score function $\sum_{i=1}^{N}S^i(\theta)$). On the other hand, by requiring that the sample average $\frac{1}{N}\sum_{i=1}^{N}I^i(\theta)$ of the Fisher information matrices $I^i(\theta)$  converges to a positive definite limiting matrix $I(\theta)$. Under these conditions, the Lindeberg-Feller central limit theorem assures that the scaled $N$-sample score function is asymptotically $\mathcal{N}(0,I(\theta))$ distributed and a Taylor expansion gives the asymptotic normality of the MLE \citep{bradley1962asymptotic, hoadley1971asymptotic, gabbay2011philosophy}.\\
The regularity conditions imposed on the densities as mentioned above are as standard as  restrictive. If, for instance, random effects are supposed to have a double exponential distribution, i.e., a distribution whose density is not differentiable at $x=\theta$, those regularity conditions can not be met.  The Laplace density is, however, "almost" regular. In fact, it satisfies a particular type of first-order differentiability and can perfectly be treated by a less standard, but more general road to verification of consistency and asymptotic normality. It dispenses with the previously mentioned strong regularity conditions on the density functions and instead  builds upon $L_2$-differentiability and the LAN property of a sequence of statistical models \citep{le2012asymptotic, ibragimov2013statistical}.

\subsubsection{The convergence of the averaged Fisher informations }
When studying the asymptotic behavior of the MLE in the setting of independent, but not identically distributed observations, a common - and intuitive -  assumption is to require that the sample average $\frac{1}{N}\sum_{i=1}^{N}I^i(\theta) = \frac{1}{N}I_N(\theta)$ of the individual Fisher information matrices converges to a deterministic, symmetric, positive definite (SPD) limit matrix $I(\theta)$ as the sample size grows to infinity, (see, e.g.,  \citet[condition N7]{bradley1962asymptotic}, or \citet[equation (13)]{hoadley1971asymptotic}). This not only simplifies verification of the assumptions  considerably, it is also  natural when compared to the i.i.d. case, where $I_N(\theta) = NI(\theta)$ and $\left[ \frac{1}{N}I_N(\theta)\right]^{-1} = I(\theta)^{-1}$ is the  asymptotic variance of the scaled MLE $\sqrt{N}\hat{\theta}_N$. However, it would be convenient to break the requirement down to the level of the model structure. If, for instance, the only structural difference between the distributions of the $X^i$ is caused by inclusion of covariates, it is natural to ask whether one can formulate conditions on the average behavior of the covariates, as it is done in standard linear regression. This, however, is not possible for SDMEM, not even if we assume the simplest case where the drift function $F$ is linear in state, covariates, fixed and random effects and if the latter are Gaussian distributed with known covariance matrix. And here is why. 
Assume a standard linear regression model $y_N = \mathcal{X}_N\mu + \epsilon_N$ with $N$ observations collected in the response vector $y_N$, deterministic $N\times p$ design matrix $\mathcal{X}_N$ containing the covariate information for all subjects, unknown parameter vector $\mu$ and $N$-dimensional vector $\epsilon_N$ of uncorrelated noise. The Fisher information matrix is given by $I_N(\mu) = \mathcal{X}_N'\mathcal{X}_N$  and the standard assumption is $\frac{1}{N}(\mathcal{X}_N' \mathcal{X}_N)\rightarrow V$ for some SPD matrix $V$. The matrix $I_N(\mu)$ has elements $\sum_{i=1}^{N}x^i_k x^i_l$, therefore the convergence requirement translates to assuming that second order sample averages of the covariates converge.  If the linear model additionally includes random effects $\phi\sim\mathcal{N}(0,\Omega)$, that is, $y_N = \mathcal{X}_N\mu + \mathcal{Z}_N \phi + \epsilon_N$ and $\mathcal{Z}_N$ is a deterministic $N\times d$ design matrix, a standard assumption (see, e.g., \cite{pinheiro2006mixed}) for asymptotic normality is (among others) the convergence of $\frac{1}{N}\mathcal{X}_N' R_N(\Omega)^{-1}\mathcal{X}_N$ to a SPD $V$, where  $R_N(\Omega) = I+\mathcal{Z}_N\Omega \mathcal{Z}_N'$ is the covariance matrix of $y_N$. So also for linear mixed effects models one can break the convergence of the average Fisher information down to conditions on a second-order average behavior of the covariates. In the SDE case the situation is more difficult. In fact, assuming moment statistics of the covariates to converge is \textit{not} enough to ensure convergence of the Fisher information matrix, which we illustrate now in the simplest possible example that includes covariates. 
Assume $r=1$ and $d=2$ and consider the dynamics $dX_t^i = \left[X_t^i(\mu^1 + \phi^{i,1}) + D_t^i(\mu^2 + \phi^{i,2})\right] dt + dW_t^i$ for $0\leq t\leq T$ with $X_0^i = x_0$.  The vector $\mu = (\mu^1,\mu^2)'$ is the unknown fixed effect and the $\phi^i = (\phi^{i,1},\phi^{i,2})'$, $i=1,\ldots,N$, are i.i.d. two-dimensional  random effects with $\mathcal{N}(0,\Omega)$ distribution. Assume that the covariance  matrix $\Omega$ is known, such that $\theta = \mu$ is the only unknown parameter.  This setup is a special case of the example in subsection \ref{affine_with_covariates}.  The sufficient statistics $U_i,V_i$ are given by \looseness=-1
\begin{align*}
U_i =\begin{pmatrix}\int_0^TX_t^i dX_t^i\\ \int_0^T D_t^i dX_t^i\end{pmatrix} \quad\text{ and }\quad V_i = \begin{pmatrix}\int_0^{T^i} (X_t^i)^2dt
& \int_0^{T^i} X_t^iD_t^i dt\\ \int_0^{T^i} X_t^i D_t^idt & \int_0^{T^i} (D_t^i)^2 dt\end{pmatrix}.
\end{align*} 
The Fisher information is by definition $I^i(\mu) =  \mathbb{E}_\mu\left( -\frac{d}{d\mu}S^i(\mu) \right) = \mathbb{E}_\mu\left( f(U_i, V_i) \right)$, where the function $f$ is the negative second derivative of the log-likelihood function. Since the log-likelihood is quadratic in $\mu$, $f$ will  in fact not depend on the parameter (since $\Omega$ is known).  We  immediately conclude from the expression of $p^i$ in eq. \eqref{linear_like} that $f(U_i,V_i) = G^i(\Omega) =  (I+V_i\Omega)^{-1} V_i $, such that $I^i(\mu) =  \mathbb{E}_\mu\left(G^i(\Omega)\right)$. The $2\times 2$ matrix $G^i(\Omega)=\inv{\left(I + V_{i}\Omega\right)} V_{i}$ is, however, a non-linear function of $V_i$ and thus finding an explicit expression for $I^i(\mu)$ is generally impossible -  even in the simple linear case, where $X^i$ is nothing but a Gaussian process. 
For comparison, in the linear mixed effects model, the log-likelihood for observation $y^i$ with covariate vectors $x^i, z^i$ is proportional to $-\frac{1}{2} (y^i-(x^i)'\mu)'R^i(\Omega)(y^i-(x^i)'\mu)$, with $R^i(\Omega) = (I+z_i' \Omega z_i)^{-1}$ as inverse covariance matrix of $y^i$. The Fisher information is $\mathbb{E}_\mu\left(f(x^i, z^i)\right) $ with $f(x^i, z^i) = x^iR^i(\Omega)(x^i)'$. The crucial difference, as compared to the SDE case, is that the matrix $R_i(\Omega)$ is deterministic. This renders calculation of the expectation unnecessary, such that $I^i(\mu)= \mathbb{E}_\mu\left(f(x^i, z^i)\right)= f(x^i, z^i)$. Therefore, requiring convergence of $\frac{1}{N}\sum_{i=1}^{N}I^i(\theta)$ is nothing but  asking for a limiting behavior of covariate averages $\frac{1}{N}\sum_{i=1}^{N} f(x^i,z^i)$. This is particularly attractive as one can often design the experiment in such a way that the required limiting behavior holds. 
In the SDE case, however, it will not - not even in the simple linear case  - be possible to break the condition $\frac{1}{N}I_N(\theta)\rightarrow I(\theta)$ down to the level of covariates, by requiring that an expression of the form $\frac{1}{N}\sum_{i=1}^{N} f(D^i,\theta)$, with $f$ being some suitable function, converges. 
Therefore, it will generally not be possible to determine from an analytical expression of $I_N(\theta)$, whether the condition $\frac{1}{N}I_N(\theta)\rightarrow I(\theta)$ holds! Of course, this is not the end of the day, as the direct way via specific expressions for $I_N(\theta)$ is not the only possible road to show convergence. Averages of the form $\frac{1}{N}\sum_{i=1}^{N}a_i$ converge, for instance, if the sequence $\{a_i\}$ converges to a limit $a$ as $i\rightarrow\infty$.  In this spirit, 
an alternative way would be to, heuristically speaking, assume that everything  which is deterministic and individual-specific, or random but not with the same distribution across all individuals, converges as $i$ goes to infinity to a limit (for instance,  $x_0^i \rightarrow x_0, T^i\rightarrow T, D^i\rightarrow D$) in a suitable sense. Such an assumption corresponds to requiring that for large $i$ the observations are, in fact, identically distributed. Exemplified in the linear example, one could proceed as follows. The Fisher information based on observation $X^i$ is  $I^i(\theta) = \mathbb{E}_\theta(f(U_i, V_i, \theta))$ and $U_i = U(x_0^i, T^i, D^i),V_i = V(x_0^i, D^i, T^i)$ with suitable functions $f,U,V$. If one can show that $U, V$ and $f$ are  continuous in all arguments, a.s. convergence of $x_0^i,T^i, D^i$ to limits $x_0, T, D$ (as $i\rightarrow\infty$)  implies convergence of $f(U^i,V^i,\theta)$. If the family $\{ f(U^i,V^i, \theta); \; i\in \mathbb{N}  \}$ is uniformly integrable, the a.s. convergence implies the convergence of moments and thus $I^i(\theta)\rightarrow I(\theta)$ and therefore also the average converges, i.e.  $\frac{1}{N}I_N(\theta) = \frac{1}{N}\sum_{i=1}^{N}I^i(\theta)\rightarrow I(\theta)$. Note, however, that due to the nonlinear dependence of the function $V$ on $D^i$, convergence of covariate averages of, for instance, the form $\frac{1}{N}\sum_{i=1}^{N} D^i_t \rightarrow D_t$ is not enough to ensure that the averaged Fisher informations converge - even if the other quantities $x_0^i, T^i$ are the same for all individuals.

\subsubsection{Asymptotics of the MLE with generalized conditions}
A framework that also captures less regular models is provided by \cite{ibragimov2013statistical} and will here be adapted to the present setting. Those results that are not included in \cite{ibragimov2013statistical} are  adaptations of ones therein and proofs will  be omitted. We make the following assumptions.

\begin{standAss}{}
	\mbox{}
	\begin{itemize}
		\item  $\theta\mapsto p^i(\theta)$  is $\nu^i$-a.s. continuous  and  $\theta\mapsto \sqrt{p^i(\theta)}$ is $L_2(\nu^i)$-differentiable, i.e., $p^i(\theta)$ is \textit{Hellinger differentiable} with $L_2(\nu^i)$-derivative  $\psi^i(\theta)$ (a row vector).  That is, for each $\theta$, \\$\int \tabs{\psi^i(\theta;x)}^2d\nu^i(x)<\infty$,   ${\lim_{\tabs{h}\rightarrow0}\tabs{h}^{-2} \int \tabs{\sqrt{p^i(\theta+h; x)} - \sqrt{p^i(\theta; x)} - \psi^i(\theta; x) h}^2d\nu^i(x) =0} $.\looseness=-1 
		\item  $\psi^i(\theta)$ is continuous in $L_2(\nu^i)$. 
        Consequently, the matrix  $I^i(\theta) = 4\int \psi^i(\theta; x)'\psi^i(\theta; x)d\nu^i(x)$ exists, is continuous and the $N$-sample Fisher information matrix can be defined as $I_N(\theta) = \sum_{i=1}^{N} I^i(\theta)$. 
		\item  $0< \inf_{\theta\in \Theta} \babs{\frac{1}{N}I_N(\theta)} \leq \sup_{\theta\in\Theta}\babs{\frac{1}{N}I_N(\theta)}<\infty$.
		\item There is a SPD matrix $I(\theta)$ such that $\lim_{N\rightarrow\infty}\sup_{\theta\in K}\babs{\frac{1}{N}I_N(\theta)-I(\theta)} \rightarrow 0$ and \\$\lim_{N\rightarrow\infty}\sup_{\theta\in K}\babs{\left(\frac{1}{N}I_N(\theta)\right)^{-1/2} - I(\theta)^{-1/2}} = 0$.
	\end{itemize}
\end{standAss}
Assuming that the (norm of the) Fisher information matrix grows beyond bounds corresponds to the requirement of infinite flow of information, which is naturally connected to the consistency of estimators. The $L_2$-differentiability is neither a stronger nor weaker concept than standard (point-wise) differentiability. One may think of the relation between the two differentiability concepts as of the one between $L_2$-convergence and almost sure convergence -  without further assumptions, in general none of them implies the other, but under certain conditions, the limits are identical. Of course, if $p^i$ is $L_2$-differentiable and differentiable in the ordinary sense, then $\psi^i(\theta; x) = \frac{d}{d\theta}\left[ p^i(\theta; x)^{1/2}\right]$. \\
Analogously to the traditional setting, we  call $S^i(\theta) = 2 p^i(\theta)^{-1/2} \psi^i(\theta)$  the score function of sample $i$ and set $S_N(\theta) = \sum_{i=1}^{N}S^i(\theta)$ for the $N$-sample score function. A result familiar from traditional theory is that the score function is centered, which under the above conditions also holds true here, \cite[p. 115]{ibragimov2013statistical}. 
The likelihood ratio process (random field), which will be defined on the local parameter space $\Theta_{N,\theta} = \{h\in\R^q:\, \theta + I_N(\theta)^{-1/2}h\in\Theta\}$, is denoted by $L_{N,\theta}(h) = p_N(\theta + I_N(\theta)^{-1/2}h) / p_N(\theta)$.

\begin{remark}
		\mbox{}
	\begin{enumerate}
		\item Sufficient conditions for the (a.s.) continuity of  $p^i(\theta) = \int q^i(\mu,\varphi) g(\varphi; \vartheta) d\varphi$ in $\theta$  are continuity of $\mu\mapsto q^i(\mu,\varphi)$ and $\vartheta\mapsto g(\varphi; \vartheta)$, together with the existence of a dominating, integrable function, $q^i(\mu,\varphi) g(\varphi; \vartheta) \leq H(\varphi)$.
		If the density $g$ of the random effects is  assumed to be Gaussian, it is naturally continuous in $\vartheta$ (provided the variance parameter is bounded away from zero). For the continuity of $q^i$ in $\mu$, we remark the following: 
		Suppose  $F$ is continuous and assume for simplicity $\Sigma(t,x)\equiv I$ is the identity matrix. If $F$ is uniformly continuous in $\mu$ (for instance differentiable with bounded derivative/Jacobian), then $\mu\mapsto \int_0^{T^i}F(s,x(s),\mu,\varphi)'F(s,x(s),\mu,\varphi)ds$ is continuous. If $F$ moreover has the property 
		$\abs{F(s, x, \mu, \varphi) - F(s,x,\mu_0, \varphi)} \leq K(1+\abs{x}^\kappa)\abs{\mu-\mu_0}$ for some $\kappa>0$, Kolmogorov's continuity criterion yields the  continuity (rather, existence of an in $\mu$ continuous version) of $q^i$ in $\mu$.
		\item 	Suppose $\theta\mapsto \sqrt{p^i(\theta)}$ is continuously differentiable. Then (since $p^i>0$) the quantity $\tilde{S}^i(\theta) := 2p^i(\theta)^{-1/2}\frac{d}{d\theta}p^i(\theta)$ is well-defined. If  the expression $\tilde{I}^i(\theta) = \mathbb{E}_\theta(\tilde{S}^i(\theta)\tilde{S}^i(\theta)')$ is finite and moreover continuous, then $\theta\mapsto \sqrt{p^i(\theta)}$ is $L_2$-differentiable \cite[Lemma 7.6]{van2000asymptotic},  the $L_2$-derivative coincides with the point-wise derivative and $\tilde{S}^i(\theta) = S^i(\theta), \tilde{I}^i(\theta) = I^i(\theta)$. 
	\end{enumerate}
\end{remark}

\subsubsection{General results on consistency and asymptotic normality}
In this part, we give conditions  on the asymptotic behavior of the MLE in our present framework. For simplicity, $\Theta\subseteq\R^q$ is assumed to be open, bounded and convex and in all what follows, $K\subset\Theta$ is a (fixed) compact subset. Whenever we write $\theta_N$, we mean that it is of the form $\theta_N = \theta + I_N(\theta)^{-1/2}h$ for $\theta\in K$ and $h\in\Theta_{N,\theta}$.

\begin{thm}[Consistency]\label{I51}
	The MLE is uniformly on $K$ consistent, if
	\begin{enumerate}
		\item[(A.1)] There is a constant $m>q$ such that	 $\sup_{\theta\in K}\mathbb{E}_\theta\left(\abs{S_N(\theta)}^m\right)<\infty$.
		\item[(A.2)] There is a positive constant $a(K)$  such that for (sufficiently large $N$ and) all $\theta\in K$ (and all $h\in\Theta_{N,\theta}$)
		$H^2_i(\theta, \theta_N) \geq a(K)\frac{\abs{\theta_N-\theta}^2}{1+\abs{\theta_N-\theta}^2},$
		where $H^2_i(\theta_1,\theta_2) = \int \left(\sqrt{p^i(\theta_1)} - \sqrt{p^i(\theta_2)}\right)^2 d\nu^i$ is the squared Hellinger distance between $\mathbb{Q}^i_{\theta_1}$ and $\mathbb{Q}^i_{\theta_2}$.
	\end{enumerate}
\end{thm}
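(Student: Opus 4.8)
The plan is to follow the programme of \citet{ibragimov2013statistical} and reduce consistency of the MLE to two quantitative estimates on the normalized likelihood ratio random field $L_{N,\theta}(h) = p_N(\theta + I_N(\theta)^{-1/2}h)/p_N(\theta)$: a tail (large-deviation) bound showing that $L_{N,\theta}(h)$ is small for large $\abs{h}$, and an increment bound controlling the modulus of continuity of $h\mapsto L_{N,\theta}^{1/2}(h)$. Once both hold uniformly over $\theta\in K$ and over the admissible set $h\in\Theta_{N,\theta}$, the general consistency theorem of \citet{ibragimov2013statistical} applies and yields that $\hat\theta_N$ is uniformly on $K$ consistent. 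I would extract the tail bound from (A.2) and the increment bound from (A.1).

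For the tail bound I would exploit independence of the $X^i$ to factorize the expected root likelihood ratio into Hellinger affinities,
\[
\mathbb{E}_\theta\!\left(L_{N,\theta}^{1/2}(h)\right) = \prod_{i=1}^N \int \sqrt{p^i(\theta_N)\,p^i(\theta)}\,d\nu^i = \prod_{i=1}^N\left(1 - \tfrac{1}{2}H^2_i(\theta,\theta_N)\right).
\]
Using $1-x\le e^{-x}$ followed by (A.2) gives
\[
\mathbb{E}_\theta\!\left(L_{N,\theta}^{1/2}(h)\right) \le \exp\!\left(-\tfrac{1}{2}\sum_{i=1}^N H^2_i(\theta,\theta_N)\right) \le \exp\!\left(-\tfrac{a(K)}{2}\,N\,\frac{\abs{\theta_N-\theta}^2}{1+\abs{\theta_N-\theta}^2}\right).
\]
Since $\theta_N-\theta = I_N(\theta)^{-1/2}h$, one has $N\abs{\theta_N-\theta}^2 = h'(I_N(\theta)/N)^{-1}h \ge c_1\abs{h}^2$ by the Standing Assumption's upper bound $\sup_{\theta\in K}\babs{\frac{1}{N}I_N(\theta)}<\infty$; moreover $\abs{\theta_N-\theta}$ is bounded by the diameter of the bounded set $\Theta$, so $1+\abs{\theta_N-\theta}^2$ is bounded above. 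Together these give $\mathbb{E}_\theta(L_{N,\theta}^{1/2}(h))\le e^{-c\abs{h}^2}$ uniformly in $N$ (large), $\theta\in K$ and $h\in\Theta_{N,\theta}$.

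For the increment bound I would use the $L_2$-differentiability of $\theta\mapsto\sqrt{p^i(\theta)}$. Along the segment joining $h_1$ and $h_2$, $L_{N,\theta}^{1/2}$ is $L_2$-differentiable with derivative $\tfrac{1}{2}L_{N,\theta}^{1/2}(h)\,S_N(\theta_N)\,I_N(\theta)^{-1/2}$, so writing the difference as a path integral and applying Minkowski's and Hölder's inequalities bounds $\mathbb{E}_\theta(\abs{L_{N,\theta}^{1/2}(h_1)-L_{N,\theta}^{1/2}(h_2)}^m)$ by a constant times $\babs{I_N(\theta)^{-1/2}}^m \sup_{\theta\in K}\mathbb{E}_\theta(\abs{S_N(\theta)}^m)\,\abs{h_1-h_2}^m$. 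The score moment bound (A.1), together with the lower bound $0<\inf_{\theta\in K}\babs{\frac{1}{N}I_N(\theta)}$ controlling $\babs{I_N^{-1/2}}$, then yields $\mathbb{E}_\theta(\abs{L_{N,\theta}^{1/2}(h_1)-L_{N,\theta}^{1/2}(h_2)}^m)\le C\abs{h_1-h_2}^m$ uniformly. Because $m>q=\dim\Theta$, Kolmogorov's continuity criterion in $\R^q$ provides a continuous modification of the field and the supremum estimates needed to combine with the tail bound.

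I expect the increment bound to be the main obstacle. The delicate points are that the score in the derivative is evaluated at the shifted parameter $\theta_N$ rather than at $\theta$, so (A.1) must be used as a genuine supremum over $K$ (or a fixed compact enlargement) and one must check that $\theta_N$ stays in that set for the relevant range of $h$; that the path integral couples $L_{N,\theta}^{1/2}(h)$ with the score, so the tail bound of the previous step is needed to keep the root likelihood ratio integrable after Hölder; and that all constants must be made uniform over $\theta\in K$ and over the $N$-dependent set $\Theta_{N,\theta}$. Assembling these into a single Hölder-type estimate with the correct $\abs{h_1-h_2}$-scaling, and then feeding both estimates into the machinery of \citet{ibragimov2013statistical}, is the technical heart of the argument; the remaining steps are bookkeeping already carried out there.
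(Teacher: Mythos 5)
Your proposal takes essentially the same route as the paper: its proof consists of citing \cite{ibragimov2013statistical}, stating that (A.1) is an extension of Lemma III.3.2 (the moment bound on increments of the root likelihood ratio, obtained from score moments and $L_2$-differentiability) and that (A.2) is adapted from Lemma I.5.3 (the exponential tail bound obtained from the Hellinger-distance lower bound), which are exactly the two estimates you reconstruct before feeding them into the general consistency machinery. The delicate points you flag (uniformity over $K$, the score evaluated at the shifted parameter, integrability after H\"older) are precisely the details the paper defers to the cited reference, so your sketch matches its proof in both structure and level of rigor.
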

\begin{proof}
	(A.1) is an extension of Lemma III.3.2. in \cite{ibragimov2013statistical} to the setting of non-homogeneous observations and (A.2) is adapted from \cite[Lemma I.5.3]{ibragimov2013statistical}.
\end{proof}

\begin{remark}
	If the dimension of the parameter set is 1, the first condition above can be replaced by a sub-quadratic growth condition on the Hellinger distance (for i.i.d. observations, see \citet[Theorem I.5.3]{ibragimov2013statistical}). In that case, one  can instead require that $H^2(\theta_1,\theta_2)\leq A \tabs{\theta_2-\theta_1}^2$, such that for one-dimensional parameter sets,  consistency here reduces to $H^2(\theta_1,\theta_2)$ behaving asymptotically as $\tabs{\theta_2-\theta_1}^2$. 
\end{remark}

The following theorem implies the so-called uniform asymptotic normality of the model, which in turn gives rise to the asymptotic normality of the MLE  (cf. Theorems II.6.2. and III.1.1 in \cite{ibragimov2013statistical}). 

\begin{thm}[Asymptotic normality]\label{III11}  
	Assume (A.1) and (A.2) from Theorem \ref{I51} hold. If additionally 
	\begin{enumerate}
		\item[(B.1)] $\{S^i(\theta), i=1,\ldots, N\}$ satisfy the Lyapunov condition uniformly in $K$, i.e.
		there is $\delta >0$ such that
		$	\lim_{N\rightarrow\infty}\, \sup_{\theta \in K}\; \sum_{i=1}^{N}\mathbb{E}_\theta\left(\abs{I_N(\theta)^{-1/2}S^i(\theta)}^{2+\delta}\right) = 0.$			
		\item[(B.2)] $\forall R>0:$ 
		$\lim_{N\rightarrow\infty}\; \sup_{\theta\in K} \;\sup_{\tabs{h}<R}\quad \sum_{i=1}^N\int_{C_{T^i}}\left( \left[ \psi^i(\theta_N) -  \psi^i(\theta)\right] I_N(\theta)^{-1/2} h\right)^2 d\nu^i = 0,$
	\end{enumerate}
	are satisfied, $\{\hat{\theta}_N\}_{N\in\mathbb{N}}$ is uniformly in $K$ consistent, asymptotically Gaussian distributed with parameters  $(\theta, I_N(\theta)^{-1})$ and all moments of $\{I_N(\theta)^{1/2}(\hat{\theta}_N - \theta)\}_{N\in\mathbb{N}}$ converge uniformly in $K$ to the corresponding moments of the $\mathcal{N}(0,I)$ distribution.
\end{thm}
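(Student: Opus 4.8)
The plan is to establish the uniform local asymptotic normality (LAN) of the sequence of experiments on $K$ and then to invoke Theorems II.6.2 and III.1.1 of \cite{ibragimov2013statistical}: once the normalized likelihood ratio field is shown to converge, uniformly in $\theta\in K$, to the Gaussian shift field, and once the two regularity estimates on that field supplied by (A.1)--(A.2) are in place, their general theory yields exactly the three announced conclusions (uniform consistency, the $\mathcal{N}(\theta,\inv{I_N(\theta)})$ limit, and convergence of all moments of $I_N(\theta)^{1/2}(\hat{\theta}_N-\theta)$). Throughout I study $Z_{N,\theta}(h):=L_{N,\theta}(h) = p_N(\theta + I_N(\theta)^{-1/2}h)/p_N(\theta)$ as a random field on the local parameter space $\Theta_{N,\theta}$, with $\theta_N = \theta + I_N(\theta)^{-1/2}h$.

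The core step is the quadratic expansion of $\log Z_{N,\theta}$. Writing $\zeta_N^i(h) = (p^i(\theta_N)/p^i(\theta))^{1/2}-1$ gives $\log Z_{N,\theta}(h) = 2\sum_{i=1}^{N}\log(1+\zeta_N^i(h))$. Hellinger differentiability of $p^i$ linearizes the increments, $\sqrt{p^i(\theta_N)}-\sqrt{p^i(\theta)} = \psi^i(\theta)I_N(\theta)^{-1/2}h + \rho_N^i$, so that $\zeta_N^i(h) = \tfrac{1}{2}S^i(\theta)I_N(\theta)^{-1/2}h + \tilde{\rho}_N^i$. Plugging this into $2\log(1+x) = 2x - x^2 + O(x^3)$ and collecting terms, the centered linear part produces $S_N(\theta)I_N(\theta)^{-1/2}h$; using that the score is centered and that $\mathbb{E}_\theta[(p^i(\theta_N)/p^i(\theta))^{1/2}] = 1 - \tfrac{1}{2}H_i^2(\theta,\theta_N)$, the drift carried by $2\sum_i\zeta_N^i$ and the quadratic term $-\sum_i(\zeta_N^i)^2$ each reduce, via $\sum_i H_i^2(\theta,\theta_N)\to\tfrac{1}{4}\abs{h}^2$, to $-\tfrac{1}{4}\abs{h}^2$. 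This yields
\begin{align*}
\log Z_{N,\theta}(h) = S_N(\theta)I_N(\theta)^{-1/2}h - \tfrac{1}{2}\abs{h}^2 + r_{N,\theta}(h),
\end{align*}
with $\sup_{\theta\in K}\sup_{\abs{h}\leq R}\rabs{r_{N,\theta}(h)}\to 0$ in $\mathbb{Q}_{N,\theta}$-probability. Condition (B.2) is precisely what forces the linearization remainders $\sum_i\int(\tilde{\rho}_N^i)^2 d\nu^i$ to vanish and permits replacing $\psi^i(\theta_N)$ by $\psi^i(\theta)$ uniformly, while condition (B.1) renders $\max_i\rabs{\zeta_N^i}$ negligible, which both legitimizes the $\log(1+x)$ expansion and, through the Lindeberg--Feller theorem, delivers the central limit theorem $I_N(\theta)^{-1/2}\tt{S_N(\theta)}\to\mathcal{N}(0,I)$ in distribution under $\mathbb{Q}_{N,\theta}$ (its covariance being $I_N(\theta)^{-1/2}I_N(\theta)I_N(\theta)^{-1/2}=I$). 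Finite-dimensional convergence of $Z_{N,\theta}$ to $\exp(\tt{h}\Delta-\tfrac{1}{2}\abs{h}^2)$ with $\Delta\sim\mathcal{N}(0,I)$ follows.

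To pass from finite-dimensional convergence to the behaviour of the maximizer I would verify the two field estimates demanded by the Ibragimov--Has'minskii scheme. First, a tail bound: since $\mathbb{E}_\theta Z_{N,\theta}(h)^{1/2} = \prod_i(1-\tfrac{1}{2}H_i^2(\theta,\theta_N))\leq\exp(-\tfrac{1}{2}\sum_i H_i^2(\theta,\theta_N))$, the lower bound (A.2) produces an estimate $\mathbb{E}_\theta Z_{N,\theta}(h)^{1/2}\leq\exp(-c(K)g(\abs{h}))$ with $g$ increasing and unbounded, which confines the argmax. Second, an increment bound $\mathbb{E}_\theta\rabs{Z_{N,\theta}(h_1)^{1/2}-Z_{N,\theta}(h_2)^{1/2}}^m\leq C(R)\abs{h_1-h_2}^m$ for $\abs{h_j}\leq R$, obtained from (A.1) by differentiating $Z_{N,\theta}^{1/2}$ and controlling $\mathbb{E}_\theta\abs{S_N}^m$; with $m>q$ this gives tightness of the field. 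Feeding the LAN limit together with these two bounds into Theorems II.6.2 and III.1.1 of \cite{ibragimov2013statistical}, the uniform consistency having already been secured in Theorem~\ref{I51}, closes the argument.

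The step I expect to be the main obstacle is the uniformity over $\theta\in K$ in the expansion above. The individual ingredients are routine pointwise, but the real work lies in showing that the linearization remainder, the replacement of $\psi^i(\theta_N)$ by $\psi^i(\theta)$, and the central limit normalization all hold with errors tending to zero \emph{simultaneously} over the compact set $K$ and over bounded $h$. This is exactly where the $\sup_{\theta\in K}$ formulations of (B.1) and (B.2), together with the uniform convergence of $\tfrac{1}{N}I_N(\theta)$ and of $(\tfrac{1}{N}I_N(\theta))^{-1/2}$ from the Standing Assumption, must be combined with care; adapting the identically distributed arguments of \cite{ibragimov2013statistical} to the independent, non-identically distributed setting with the matrix normalization $I_N(\theta)^{-1/2}$ is the delicate part.
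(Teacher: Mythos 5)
Your proposal follows exactly the route the paper intends: the paper omits the detailed argument, stating only that (B.1)--(B.2) yield uniform asymptotic normality of the model and that the conclusions then follow from Theorems II.6.2 and III.1.1 of \cite{ibragimov2013statistical}, which is precisely the scheme you reconstruct (quadratic LAN expansion from Hellinger differentiability plus Lindeberg--Feller under (B.1)--(B.2), the Hellinger tail bound from (A.2), and the H\"older-type increment bound from (A.1) with $m>q$). Your unpacking of the Ibragimov--Has'minskii machinery is correct, so this is essentially the same approach, only spelled out in more detail than the paper provides.
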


Condition (B.1) can be generalized to the Lindeberg condition.  If the densities $\sqrt{p^i(\theta)}$ are twice continuously differentiable with second derivative $J^i(\theta)$,  (B.2) can be replaced by requiring that\\
$\lim_{N\rightarrow\infty}\;\sup_{\theta\in K}\;\sup_{\tabs{h}\leq R}\quad\babs{I_N(\theta)^{-1/2}}^4 \sum_{i=1}^{N}\int_{C_{T^i}}\babs{J^i(\theta_N)}^2d\nu^i = 0. $
In the general setting, the $p^i$  are  not explicitly available.  One can, however, formulate more general conditions on the drift function $F$ and on the random effects density $g$ such that differentiability of $\log p^i(\theta) = \log\left( \int q^i(\mu,\varphi) g(\varphi; \vartheta) d\varphi\right)$ is guaranteed, by assuring that differentiation can be passed under the integral sign.
Sufficient conditions for the differentiability of $\log p^i(\theta)$ with respect to $\mu$ would, e.g., include differentiability of $q^i$ w.r.t. $\mu$ and a uniform in $\mu$  domination of \\ $\frac{d}{d\mu}  q^i(\mu,\varphi)  d\varphi (\int q^i(\mu,\varphi) g(\varphi; \vartheta) d\varphi)^{-1}$. Explicitly formulating these conditions is not very illustrative. Instead, it is recommended to check suitable conditions in the specific application at hand. One particular case in which the $p^i(\theta)$ are explicitly available is the case of linear Gaussian random effects, which will be considered in more detail below.\looseness=-1

\subsubsection{Affine Gaussian fixed and random effects and inclusion of covariates}\label{affine_with_covariates}
We revisit the example model \eqref{linear_SDE}, but now include for each subject $i$ a covariate information $D^i$, which is a known and deterministic function $D^i:[0,T^i]\rightarrow \R^{s}$. We let $B(t,X_t^i)=C(t,X_t^i)$ in \eqref{linear_SDE} and enrich the function $C$ by the covariate, $C(t,X^i_t,D^i)$. This model, being linear in state, covariate information,  fixed effect $\mu\in\R^d$ and in the $d$-dimensional random effects $\phi^i$, is the simplest non-trivial models with covariates. We assume that $\Theta$ is a bounded subset of $\R^d\times\mathfrak{S}_d(\mathbb{R})$.
The likelihood and score function are as in \eqref{linear_like} and \eqref{MLE_system}, respectively, the only difference being a possible subject-specific observation horizon $T^i$ and that the covariate information now enters the sufficient statistics $U_{i} = \int_{0}^{T^i}\tt{C(s,\cp_s^i, D_s^i)}\inv{\Gamma(s,X_s^i)}\left[d\cp^i_s -A(s,\cp_s^i)ds\right]$ and $  V_{i} = \int_0^{T^i}\tt{C(s,\cp^i_s, D_s^i)}\inv{\Gamma(s,X_s^i)}C(s,\cp^i_s, D_s^i)ds$. Again, we assume that $V_i$ is invertible. 
It is clear that the model is more regular than actually required and we include this example, where we verify the conditions of Theorem \ref{III11},  merely for illustration purposes. It will also be revisited in the subsequent section on simulations, where we investigate parameter estimation (and hypothesis testing) for different sample sizes and sampling frequencies.   

The set $K\subset\Theta$ is compact, so there are positive constants $A_K, B_K, C_K$ such that $\abs{\mu}\leq A_K, B_K\leq \babs{\Omega}\leq C_K$. 
One can show that $\babs{G^i(\Omega)}\leq \babs{\inv{\Omega}}$, which gives the upper bound 
$\tabs{S^i(\theta)} \leq  \left(\tabs{\gamma^i(\theta)} + \babs{\inv{\Omega} 
} + \tabs{\gamma^i(\theta)}^2\right)$. Moreover, the moment-generating function $\Phi_{\theta,\gamma^i(\theta)}(a)$ of $\gamma^i(\theta)$ can be bounded from above by   $\e{\frac{1}{2}\tt{a}\inv{\Omega} a}$, for $a\in\mathbb{R}^d$. This can be used to find that $\mathbb{E}_\theta\left(\tabs{\gamma_i(\theta)}^m\right) \leq C_1$ for some constant $C_1$ that may depend on $K,d,m$.  Therefore, there is another constant $C_2$, which may depend on $K,d,m,N,$ such that $\mathbb{E}_\theta\left( \tabs{S_N(\theta)}^m\right) \leq C_2$, proving (A.1). 
To verify (A.2), note that the regularity of $p_N(\theta)$ and its derivatives implies that 
{\small\begin{align*}
H^2(\theta, \theta_N) 
& = \int \left[-\psi_N(\theta)(\theta_N-\theta) +  \left(\sqrt{p_N(\theta_N)} - \sqrt{p_N(\theta)}\right) + \psi_N(\theta)(\theta_N-\theta) \right]^2d\nu\\
& = \int \left[-\psi_N(\theta)(\theta_N-\theta)\right]^2 d\nu + o(\tabs{\theta_N-\theta}^2) \\
& = (\theta_N-\theta)' I_N(\theta) (\theta_N-\theta) + o(\tabs{\theta_N-\theta}^2) - 2 O(\tabs{\theta_N-\theta}^2) o(\tabs{\theta_N-\theta}^2)\\
&\geq \tabs{(\theta_N-\theta)}^2 \lambda_{N,\text{min}}(\theta)+ o(\tabs{\theta_N-\theta}^2). 
\end{align*}}
where $\lambda_{N,\text{min}}(\theta)$ denotes the smallest eigenvalue of $I_N(\theta)$. 
Therefore, for $N$ sufficiently large, there is a constant $A_K$ such that $H^2(\theta, \theta_N) \geq A_K \tabs{(\theta_N-\theta)}^2$. Since $\Theta$ is bounded, we even have $\tabs{(\theta_N-\theta)}^2\geq C\frac{\tabs{(\theta_N-\theta)}^2}{1+\tabs{(\theta_N-\theta)}^2}$ for some positive constant $C$, which shows that (A.2) holds.  The Lyapunov condition (B.1) follows in a straightforward way. According to the above,  $\mathbb{E}_\theta\left(\tabs{S^i(\theta)}^3\right)\leq C$ for some $C$ and therefore
\begin{align*}
\sup_{\theta\in K}\sum_{i=1}^{N}\mathbb{E}_\theta\left(\tabs{I_N(\theta)^{-1/2}S^i(\theta)}^{3}\right) 
& \leq N^{-3/2}\sup_{\theta\in K} \babs{\sqrt{N}I_N(\theta)^{-1/2}-I(\theta)^{-1/2}} \sum_{i=1}^{N}\mathbb{E}_\theta\left(\abs{S^i(\theta)}^3\right) \\
&\quad + N^{-3/2}\sup_{\theta\in K} \babs{I(\theta)^{-1/2}}\sum_{i=1}^{N}\mathbb{E}_\theta\left(\tabs{S^i(\theta)}^{3}\right) \\
& \leq C N^{-1/2} \left[\sup_{\theta\in K} \babs{\sqrt{N}I_N(\theta)^{-1/2}-I(\theta)^{-1/2}}  + \sup_{\theta\in K} \babs{I(\theta)^{-1/2}}\right], 
\end{align*}
which converges to 0 as $N\rightarrow\infty$. To verify  (B.2), we show that 
\begin{align}\label{two_conditions}
\sup_{\tabs{h}\leq R}\; \frac{1}{N}\left[\frac{1}{N} \sum_{i=1}^{N} \mathbb{E}_{\nu^i}\left( \bbabs{J^i(\theta_N) - J^i(\theta)}^2\right)\right] \;\text{ and }\;
\frac{1}{N}\left[\frac{1}{N} \sum_{i=1}^{N} \mathbb{E}_{\nu^i}\left( \bbabs{J^i(\theta)}^2\right)\right]
\end{align}  
converge to 0 uniformly in $K$. As $J^i(\theta)$ is continuous, it is uniformly continuous on compacta, such that for all $i\in\mathbb{N}$, $ a_{i,N}= \sup_{\tabs{h}\leq R}\bbabs{J^i(\theta_N)-J^i(\theta)}$ converges a.s. to 0 as $N\rightarrow\infty$. One can show that $a_{i,N} \leq A^i(\theta,R)$ and $\mathbb{E}_{\nu^i}\left( A^i(\theta,R)^2\right)\leq D_K$. Dominated convergence therefore implies $\mathbb{E}_\theta(a_{i,N})\rightarrow 0$,  and the uniform (in $i$) bound $D_K$ implies uniform in $K$ convergence of the left term in \eqref{two_conditions} to 0. For the right hand side term in \eqref{two_conditions} we note that $\mathbb{E}_{\nu^i}\left(\babs{J^i(\theta)}^2\right) 
\leq \mathbb{E}_\theta\left( \babs{\frac{d}{d\theta} S^i(\theta)}^2 \right) + \mathbb{E}_\theta\left(  \babs{S^i(\theta)'S^i(\theta)}^2\right) < C_K$, where $C_K$ is a constant that only depends on $K$. We conclude uniform in $\theta\in K$ convergence of the right hand side term in \eqref{two_conditions} to 0. The right hand side follows similarly.

\subsection{Hypothesis testing}\label{hypothesis}
It is commonly of interest to the researcher to test whether an applied treatment  has a significant effect on the treated subjects, i.e., to test whether an underlying treatment effect $\beta$, a $s$-dimensional subparameter of the fixed effect $\mu$, $1\leq s\leq p$,  is  significantly different from 0. The asymptotic normality of the MLE in this model lends itself naturally to the application of Wald tests, which can be used to investigate two-sided null hypotheses such as $H_0: \beta = 0$ (no treatment effect) or more generally, any $k$-dimensional, $1\leq k\leq s$, linear null hypothesis $H_0: L\beta = \eta_0$, where $L$ is a $5\times k$ matrix of rank $k$ which specifies the linear hypotheses of interest and $\eta_0\in \R^k$. The Wald test statistic is 
$\hat{W}_N = \left(  L\hat{\beta}_N - \eta_0  \right)' \left(L\hat{V}_NL'\right)^{-1} \left(  L\hat{\beta}_N - \eta_0  \right)',$
which is under the null hypothesis asymptotically $\chi^2$-distributed with $k$ degrees of freedom \citep{lehmann2006testing}. Here,  $\hat{\beta}_N$ is the MLE of $\beta$ and ${\hat{V}_N = \widehat{\mathbb{C}\text{ov}}(\hat{\beta}_N)}$ denotes its estimated variance-covariance matrix of $\hat{\beta}_N$.

\section{Simulations}\label{simulations}
\subsection{Linear transfer model}
The first example, which is inspired from a study on the selenomethionine metabolism in humans \citep{ruse2015absorption}, is a 5-dimensional linear transfer model, which finds applicability in various fields, especially in modeling population flows or in pharmacokinetics. A component in the model's state vector can be viewed to represent the concentration of a substance in a certain compartment and the model describes the (linear) flow between compartments. We consider a basic cascade-shaped transfer structure as illustrated in Figure \ref{fig:cascade}.  When observing $N$ subjects, each of them following the linear transfer model in Figure \ref{fig:cascade}, it is often reasonable to assume that the transfer rates are subject-specific.  We moreover assume that we are given covariate information on subject $i$ in form of a (deterministic) variable $D^i\in\{0,1\}$. It encodes the affinity of subject $i$ to one of two possible study groups, such as \textit{placebo} and \textit{treatment}.  Consider the model (for simplicity assuming unit diffusion) $dX_t^i = F(t,X_t^i, D^i, \mu,\phi^i) dt + dW_t^i$,  $0\leq t\leq T, X_0^i = 0,$ where $\mu' = (\alpha,\beta)$ is the fixed parameter and the drift function has the specific form $F(t,X^i_t,D^i,\mu,\phi^i) = -G(\alpha + \phi^i) X_t + D^i\beta$, $i=1,\ldots,N$, with rate matrix
\begin{align*}
G(\alpha)= \begin{pmatrix} 
\alpha_1 & 0 & 0 & 0 & -\alpha_5 \\
-\alpha_1     & \alpha_2 & 0 & 0 & 0 \\
0 & -\alpha_2 & \alpha_3+\alpha_6 &  0 & 0 \\
0 & 0 & -\alpha_3 & \alpha_4 & 0 \\
0 & 0 & 0 & -\alpha_4 & \alpha_5 \end{pmatrix}.
\end{align*}
The (unknown) fixed effect $\mu$ has the 6-dimensional component $\alpha$, which is shared across both groups (\textit{placebo} and \textit{treatment}) and an additional 5-dimensional component $\beta$, which describes the effect of the covariate (treatment effect) on the subjects' dynamics. We let  $\beta' = (1,2,3,1,-2)$. The random effects $\phi^i$ are  i.i.d. $\mathcal{N}(0,\Omega)$-distributed and the covariance matrix $\Omega$ is unknown.  With $\alpha' = (\alpha_1,\ldots,\alpha_6) = (2, 4, 3, 2, 1,1) $, all eigenvalues of $G(\alpha)$ have positive real parts, implying that the model has a stationary solution. The processes $X^i$ for individuals without treatment, i.e.  $D^i = 0$, are  (on average) mean-reverting to 0, and the processes belonging to individuals in the treatment group, $D^i = 1$, are mean-reverting to the long-term mean $(G(\alpha+\phi^i))^{-1}\beta$. For our choice of parameters, conditional on $\phi^i = 0$, this long-term mean is $(7.50, 4.25, 5.00, 8.00, 14.00)'$, see also Figure \ref{fig:mOU_trace_plots}. 
The covariance matrix $\Omega$  is taken to be a diagonal matrix with  entries diag$(\Omega)=\left(0.5^2, 1^2, 1^2, 0.5^2, 0.3^2, 0.3^2\right)$. The observation horizon $T$ is fixed to $T=15$. A trajectory of $(X^1_t,\ldots, X_t^N)_{0\leq t\leq T}$ is simulated with the Euler-Maruyama scheme with simulation step size $\delta = 10^{-4}$. Figure \ref{fig:mOU_trace_plots} shows four realized (5-dimensional) trajectories of the process $X^i$. The upper two panels show trajectories for $D^i = 0$ and the lower two correspond to trajectories with $D^i = 1$. \looseness=-1

\subsubsection{Parameter estimation}
For parameter estimation, the simulated trajectories are thinned by a factor $b$ (taking only every $b$-th observation). To explore the expected time-discretization bias of the estimators, we repeated estimation for different thinning factors, $b\in\{10, 100, 1000\}$, which results in sampling intervals  $\Delta t = \delta\cdot b =  0.001, 0.01, 0.1$. To also investigate the estimation performance  as a function of  sample size, we performed estimation on trajectories  $(X^1,\ldots, X^N)$, for sample sizes $N = 20$, $N=50$ and $N = 100$.  Estimation for all considered $(\Delta t, N)$-combinations was repeated on $M=500$ simulated data sets. Tables \ref{tab:Ex1_1} and \ref{tab:Ex1_2} report the sample estimates of relative biases and root mean squared errors (RMSE) of the fixed effects and of the variances of the random effects. The relative bias of $\hat{\alpha}_j$ is computed as $ \frac{1}{M}\sum_{m=1}^{M}\frac{\hat{\alpha}_j^{(m)} - \alpha_j}{\alpha_j}$ and the RMSE as $\left(\frac{1}{M}\sum_{m=1}^{M}(\hat{\alpha}_j^{(m)} - \alpha_j)^2\right)^{1/2}, j=1,\ldots, 6$,  and with an analogous definition for the other parameters.   Table \ref{tab:Ex1_1} shows estimation results for a fixed sample size of $N=50$, and different values of sampling intervals $\Delta t$, while results in Table \ref{tab:Ex1_2} are computed for a fixed sampling interval $\Delta t = 0.001$ and different values of sample size $N$. In each table, the first six rows correspond to estimated biases and RMSEs of the shared fixed effects $\alpha_j$, $j=1,\ldots, 6$. The subsequent five rows show the estimated biases and RMSEs of the treatment effects $\beta_j$, $j=1,\ldots,5$ and the last six rows correspond to the estimated biases and RMSEs of the diagonal elements of $\Omega$ (i.e., the variances of the random effects). 
The estimation is very accurate already at sample sizes as small as $N=20$, when the data is sampled at high frequency (here $1/0.001$), see Table \ref{tab:Ex1_2}. For a moderate sampling frequency of $1/0.01$, the results in the middle part of Table \ref{tab:Ex1_1} reveal that estimates of the  fixed effects $\alpha, \beta$ are on average biased by only about 1-2\% (of the true parameter value), which is still very accurate. The variances of the random effects are estimated with an average bias of 5-9\% for $N=50$ and $\Delta t = 0.01$. When the observations are sampled at low frequency $1/0.1$, estimation gets unreliable. The bias due to the time-discretization of the continuous-time estimator is very pronounced, with values of up to 25\% for the fixed effects and up to almost 50\% for the variances of the random effects. The RMSEs rise - as compared to a 10 times higher frequency -  by more than 100\%. If only low-frequency data is available, caution is recommended and estimation should only be done on a data set that has been enlarged by imputing data in between the observation time points.\looseness=-1

\begin{figure}[h!]
	\begin{center}
		\includegraphics[ scale=0.75]{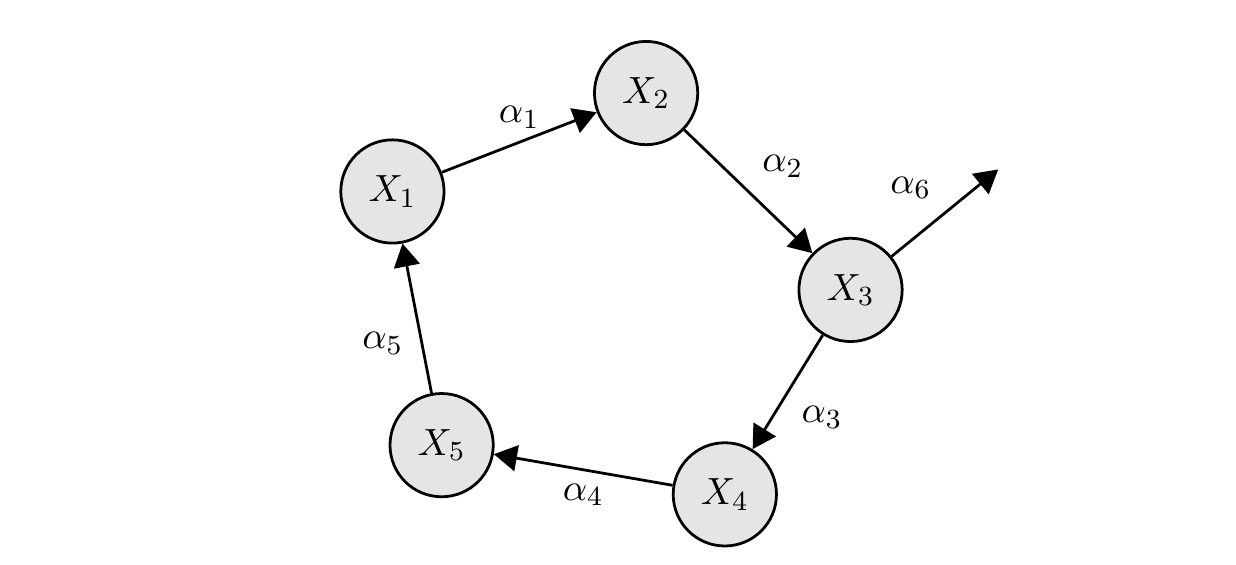}
		\caption{	\label{fig:cascade}\footnotesize Illustration of the $5$-dimensional linear transfer model used in the first simulation example. The state $X_i = (X_{i,t})_{0\leq t\leq T}$ gives  the concentration (over time) of a substance in compartment $i, i=1,\ldots,5$. The $\alpha_i, i=1,\ldots,6$ are the unknown rates of flow between corresponding compartments (for $\alpha_6$ the outflow of the system). }
	\end{center}
\end{figure}
\vspace{-0.15in}
\begin{figure}[h!]
	\begin{center}
	\includegraphics[ scale=0.35]{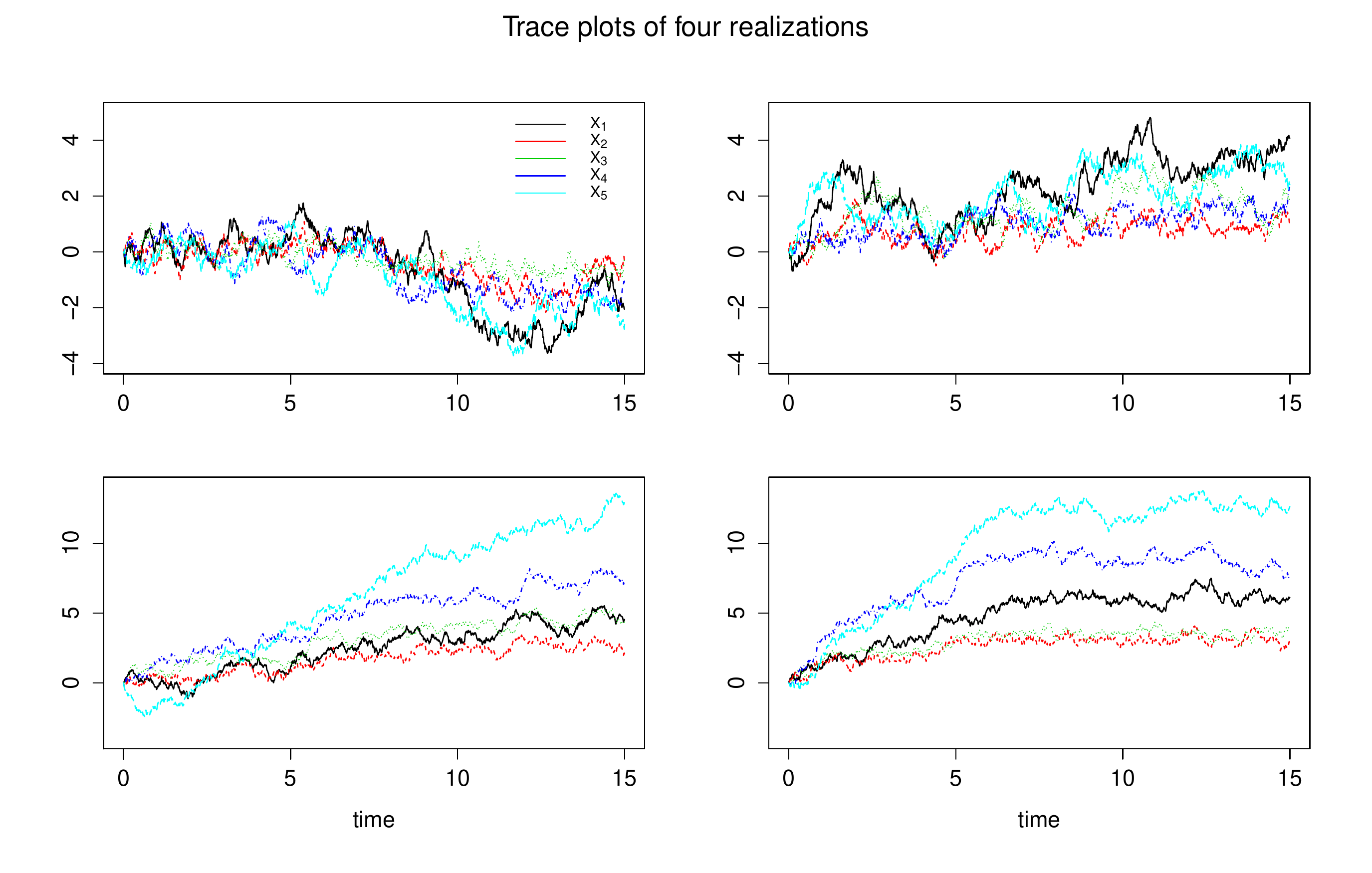}
	\end{center}
	\caption{	\label{fig:mOU_trace_plots}\footnotesize Linear transfer model: Four realizations of the 5-dimensional state process $(X^i_t)_{0\leq t\leq T}$. The upper two panels show realizations when the covariate is 0 ("reference group") and the lower two panels display trajectories for $D^i= 1$ ("treatment group"). Note the clearly visible difference in the long-term means between the two groups.}
\end{figure}

{\footnotesize
{\renewcommand{\baselinestretch}{0.6} 
\begin{table}
	\caption{	\label{tab:Ex1_1}\footnotesize Linear transfer model. Shown are estimated relative bias and RMSE of $\hat{\alpha}, \hat{\beta},$ and diag$\left(\hat{\Omega}\right)$. 
	The sample size is fixed to $N=50$, but different sampling intervals are considered ($\Delta t = 0.001, 0.01, 0.1$). 
	For each value of $\Delta t$, the estimation was repeated on $M=500$ generated data sets.}
\centering
\begin{tabular}{r r@{\hskip 1cm} rr r@{\hskip 1cm} rr r@{\hskip 1cm} rr}
\\
\hline\\
& &\multicolumn{2}{c}{$\Delta t=0.001$} & & \multicolumn{2}{c}{$\Delta t = 0.01$} & & \multicolumn{2}{c}{$\Delta t = 0.1$}\\[0.2cm]
\cline{3-4}   \cline{6-7}  \cline{9-10}\\
true value && rel. bias & RMSE && rel. bias & RMSE && rel. bias & RMSE \\[0.2cm]
\hline\\
2.00 &  & 0.001 & 0.079 &  & -0.018 & 0.086 &  & -0.182 & 0.369 \\ 
4.00 &  & -0.002 & 0.149 &  & -0.024 & 0.172 &  & -0.204 & 0.824 \\ 
$\alpha\quad$3.00 &  & 0.001 & 0.163 &  & -0.021 & 0.170 &  & -0.203 & 0.624 \\ 
2.00 &  & -0.001 & 0.083 &  & -0.017 & 0.088 &  & -0.162 & 0.332 \\ 
1.00 &  & 0.001 & 0.047 &  & -0.016 & 0.049 &  & -0.159 & 0.164 \\ 
1.00 &  & 0.002 & 0.091 &  & -0.008 & 0.091 &  & -0.082 & 0.119 \\[0.2cm]  
1.00 &  & -0.002 & 0.099 &  & -0.020 & 0.099 &  & -0.166 & 0.186 \\ 
2.00 &  & -0.002 & 0.114 &  & -0.024 & 0.121 &  & -0.198 & 0.408 \\ 
$\beta\quad$3.00 &  & 0.002 & 0.152 &  & -0.010 & 0.152 &  & -0.116 & 0.373 \\ 
1.00 &  & -0.001 & 0.148 &  & 0.014 & 0.146 &  & 0.140 & 0.188 \\ 
-2.00 &  & 0.002 & 0.124 &  & -0.024 & 0.131 &  & -0.255 & 0.522 \\[0.2cm]  
0.25 &  & -0.037 & 0.062 &  & -0.079 & 0.062 &  & -0.399 & 0.108 \\ 
1.00 &  & -0.035 & 0.208 &  & -0.095 & 0.216 &  & -0.483 & 0.498 \\ 
diag$(\Omega)\quad$1.00 &  & -0.035 & 0.215 &  & -0.085 & 0.219 &  & -0.426 & 0.445 \\ 
0.25 &  & -0.026 & 0.061 &  & -0.065 & 0.060 &  & -0.352 & 0.097 \\ 
0.09 &  & -0.009 & 0.022 &  & -0.047 & 0.021 &  & -0.333 & 0.034 \\ 
0.09 &  & -0.040 & 0.036 &  & -0.065 & 0.035 &  & -0.213 & 0.036  \\[0.2cm]  
		\hline
	\end{tabular}
\end{table}}}

{\footnotesize
	{\renewcommand{\baselinestretch}{0.6} 
	\begin{table}
	\caption{	\label{tab:Ex1_2}\footnotesize Linear transfer model. Shown are estimated relative bias and RMSE of  $\hat{\alpha}, \hat{\beta},$ and diag$\left(\hat{\Omega}\right)$. The sampling interval is fixed to $\Delta t=0.001$, but different sample sizes are considered ($N = 20, 50, 100$). For each value of $N$, the estimation was repeated on $M=500$ generated data sets.}
		\begin{tabular}{r r@{\hskip 1cm} rr r@{\hskip 1cm} rr r@{\hskip 1cm} rr}
			\\
			\hline\\
			 & &\multicolumn{2}{c}{$N=20$} & & \multicolumn{2}{c}{$N=50$} & & \multicolumn{2}{c}{$N=100$}\\[0.2cm]
			\cline{3-4}   \cline{6-7}  \cline{9-10}\\
			true value && rel. bias & RMSE && rel. bias & RMSE && rel. bias & RMSE \\[0.2cm]
			\hline\\
2.00 &  & 0.003 & 0.116 &  & 0.001 & 0.079 &  & -0.001 & 0.058 \\ 
4.00 &  & 0.001 & 0.232 &  & -0.002 & 0.149 &  & 0.001 & 0.114 \\ 
$\alpha\quad$3.00 &  & 0.003 & 0.253 &  & 0.001 & 0.163 &  & -0.001 & 0.106 \\ 
2.00 &  & -0.003 & 0.126 &  & -0.001 & 0.083 &  & -0.000 & 0.052 \\ 
1.00 &  & 0.003 & 0.074 &  & 0.001 & 0.047 &  & -0.003 & 0.031 \\ 
1.00 &  & -0.003 & 0.146 &  & 0.002 & 0.091 &  & 0.000 & 0.068 \\[0.2cm]
1.00 &  & 0.000 & 0.157 &  & -0.002 & 0.099 &  & 0.004 & 0.073 \\ 
2.00 &  & -0.001 & 0.174 &  & -0.002 & 0.114 &  & 0.002 & 0.075 \\ 
$\beta\quad$3.00 &  & 0.002 & 0.233 &  & 0.002 & 0.152 &  & 0.000 & 0.102 \\ 
1.00 &  & 0.010 & 0.231 &  & -0.001 & 0.148 &  & -0.002 & 0.102 \\ 
-2.00 &  & 0.006 & 0.203 &  & 0.002 & 0.124 &  & -0.000 & 0.087 \\[0.2cm] 
0.25 &  & -0.091 & 0.093 &  & -0.037 & 0.062 &  & -0.014 & 0.043 \\ 
1.00 &  & -0.046 & 0.355 &  & -0.035 & 0.208 &  & -0.020 & 0.162 \\ 
diag$(\Omega)\quad$1.00 &  & -0.073 & 0.343 &  & -0.035 & 0.215 &  & -0.017 & 0.163 \\ 
0.25 &  & -0.035 & 0.097 &  & -0.026 & 0.061 &  & -0.016 & 0.039 \\ 
0.09 &  & -0.045 & 0.035 &  & -0.009 & 0.022 &  & -0.021 & 0.015 \\ 
0.09 &  & -0.181 & 0.055 &  & -0.040 & 0.036 &  & -0.020 & 0.027 \\[0.2cm]  
			\hline
		\end{tabular}
	\end{table}}}

\subsubsection{Hypothesis testing}
A natural step  is to test whether $\beta$, or a subparameter, is significantly different from 0. We estimate the false-positive rate of the Wald test  (see subsection \ref{hypothesis}) in this model and investigate the test's power under different "true" (non-zero) treatment effects. The estimated variance-covariance matrix ${\hat{V}_N = \widehat{\mathbb{C}\text{ov}}(\hat{\beta}_N)}$ of $\hat{\beta}_N$ is obtained from $M=500$ (separately) computed MLEs $\hat{\beta}_N^{(m)}, m=1,\ldots,M$, where underlying data sets have been simulated under the true hypothesis (under $H_0$ for estimation of the false positive rate and under $H_1$ for power estimation).  
Tables \ref{tab:Ex1_1} and \ref{tab:Ex1_2} show that the estimation was accurate for high- and medium-frequency observations. Diagnostic plots (not shown here) reveal that the asymptotic distribution of the MLE is close to normal already for $N=20$ subjects, such that even for a rather small data set and a medium sampling frequency, test results can be considered  sufficiently reliable. The choice $(N,\Delta t) = (20, 0.01)$ provides a simulation setting that is sufficiently reliable, but at the same time not trivial and will challenge the hypothesis test, in particular for small treatment effects. The estimated false positive rate (based on $M$ under $H_0$ generated data sets) is $0.074$, revealing a slightly liberal finite-sample test behavior. The power of detecting a treatment effect (rejecting $H_0: \beta = 0$) was computed for different "true" values of $\beta$. For $\beta = (1,2,3,1,-2)'$ (values as in the estimation part above), the estimated power was 1. This comes to no surprise as the  long-term mean $(7.5, 4.25, 5, 8, 14)'$ of the state process in the treatment group is considerably different from the zero long-term mean of the control group. The power, estimated to 0.956,  was still convincing for a much smaller treatment effect $\beta = (0.1, 0.2, 0.3, 0.1, -0.2)'$, which gives a long-term mean of $(0.75, 0.425, 0.5, 0.8,  1.4)'$. This is especially impressive as the state process' standard deviation (from its long-term mean 0) under $H_0$ is about $(0.66, 0.49, 0.59, 0.72, 1.21)'$. More challenging is the rejection of $H_0$ when the treatment  has a small effect on, e.g.,  only one coordinate, $\beta = (0.1, 0, 0, 0, 0)'$. In this case (long-term mean $(0.2, 0.1, 0.1, 0.15, 0.3)'$), and for such a small sample size the chance of rejecting $H_0$ is as small as 16\% and it is thus hardly possible to detect a difference between groups. However, while being only slightly conservative, the asymptotic Wald test is able to detect a treatment effect for a rather small data set, even if it causes only a little change of the long-term mean as compared to the standard deviation of the process. 

\subsection{Fitzhugh-Nagumo model}
The deterministic Fitzhugh-Nagumo (FHN) model  \citep{fitzhugh1955mathematical, nagumo1962active}  is a two-dimensional approximation of the well-known four-dimensional Hodgkin-Huxley neuronal model \citep{hodgkin1952quantitative}  and is typically applied to model the regenerative firing mechanism in an excitable neuron. Neural firing is a complex interplay of numerous cell processes and to account for various unexplained noise sources, a stochastic FHN model can be considered \citep{jensen2012markov}, 
\begin{eqnarray}\label{FHN}
\begin{aligned}
dY_{t} &= \frac{1}{\varepsilon}\left(Y_{t} - Y_{t}^3 - Z_{t} + s\right)dt + \sigma_1 dW_{1,t},\\
dZ_{t} &= \left(\gamma Y_{t} - Z_{t} + \eta\right)dt + \sigma_2 dW_{2,t}.
\end{aligned}
\end{eqnarray}
The  variable $Y$ represents the membrane potential of a neuron, while the $Z$ coordinate  represents  the  recovery. The time scale separation $\varepsilon$ is commonly $\ll 1$, such that $Y$ lives on a much faster time scale than $Z$. The variable  $s$ is the input current. If $\gamma>1$, the system has exactly one fixed point, which may be stable or unstable, depending on the specific parameter values. 
Under the reparametrization
$\mu= \tt{(1/\varepsilon, s/\varepsilon, \gamma, \eta)}$, \eqref{FHN} may be written as in \eqref{linear_SDE}.
We assume to  study a collection of $N$ excitable neurons and model their membrane potentials $Y^i$ via   $dX_t^i = A(X_t^i) + C(X_t^i)(\mu + \phi^i) dt + \Sigma \, dW^i_t$, $0\leq t\leq T$, where $X^i = (Y^i, Z^i)'$ and the $\phi^i$ are the i.i.d. $\mathcal{N}(0,\Omega)$-distributed random effects. Observe that despite being nonlinear in the state variable, the model equations are linear in the random effects and therefore an explicit likelihood is available. We assume here that both coordinates of $X^i$ are observed. For all simulations, we let $\sigma_1 = 0.5, \sigma_2 = 0.3$ (assumed known), we fix $T = 20$ and choose the values of the unknown parameters as $\varepsilon = 0.1$,  $s = 0.5$, $\gamma = 1.5$ and $\eta = 1.2$. With this choice of $\eta$ the fixed point of the deterministic FHN system is stable, but small noise levels will suffice to induce large excursions through state space (spikes). 
The covariance matrix of the random effects is fixed as $\Omega = \text{diag}\left(1.5^2, 1^2, 0.2^2, 0.2^2\right)$. The simulation settings are as in the previous example: We simulate each trajectory $(X_t^1, \ldots, X^N_t)_{0\leq t\leq T}$ with the Euler-Maruyama scheme and a simulation time step of $\delta = 10^{-4}$. The estimation is  carried out on the thinned trajectory. We conduct estimation for different values of the sample size $N\in\{20, 50, 100\}$ to investigate the finite sample behavior. To illustrate how the discrete-time bias evolves, we repeat estimation  for thinning factors $b =10, 100, 1000$, which results in sampling intervals of $\Delta t = \delta\cdot b = 0.001, 0.01, 0.1$, respectively (note that the observation horizon is always fixed to $T=20$).  For all combinations of $N$ and $\Delta t$, the estimation is repeated on $M=500$ generated data sets. 
Figure \ref{fig:trace_plots_FHN} shows example trace plots of four realizations, which illustrate the possible qualitatively different behaviors of the state process, depending on the realized values of the random effects. Table \ref{tab:Ex2_1} shows, similar to the previous simulation example on the linear transfer model, the bias and the RMSE of the estimates, where estimation was based on samples with fixed sample size $N=50$, but repeated for different sampling intervals $\Delta t$. The estimation was done under the reparametrization $\mu = (1/\varepsilon, s/\varepsilon, \gamma,\eta)$. Estimates for the parameter $\varepsilon$ and  $s$ on the original scale are obtained by transformation. The upper six rows show estimated bias and RSME for the fixed effects (on the original and on the transformed $\mu$-scale) and the subsequent four rows correspond to results for the estimation of the diagonal of $\Omega$. Despite the non-linearity (in the state) of the model, implying violation of the absolute standard assumptions on the diffusion drift for regularity of the model, the parameter estimation for high-frequency data and moderate ($N=50$) sample size is very convincing (Table \ref{tab:Ex1_1}, first two columns), while still being satisfactory for observations sampled at medium frequency (middle two columns). If observations are sampled at low frequency (last two columns in Table \ref{tab:Ex2_1}), the bias for the estimation of $s, \gamma,\eta$ is still rather low (with 1\%, 8\% and 5\% bias, respectively, as compared to the true parameter value). The estimation of $\varepsilon$ is, however, highly biased. The variances of the random effects are all estimated with an error of about 21-28\%, except for the variance of the random effect that adds to $\varepsilon$, which has an error of as high as 69\%.  This comes to no surprise, since non-linearity in the state requires denser observations. For $\varepsilon$ we estimate the inverse of a small number, making the estimator unstable. Figure \ref{fig:FHN_relErr} illustrates the distribution of the relative bias of the $\nu$-scale estimates (i.e. of the bias divided by the true parameter value) for high-frequency observations and different sample sizes (for $N=20$ in red, for $N=50$ in green, for $N=100$ in black). Here one can see that even for small sample sizes $(N=20)$, estimates are centered around the true parameter (negligible bias, though with considerable variance) and their distribution approaches a normal distribution for large $N$, becoming increasingly centered around the true value (zero bias). \looseness=-1  
\vspace{-0.15in}
\begin{figure}[h!]
	\centering
	\includegraphics[width = 0.45 \textwidth]{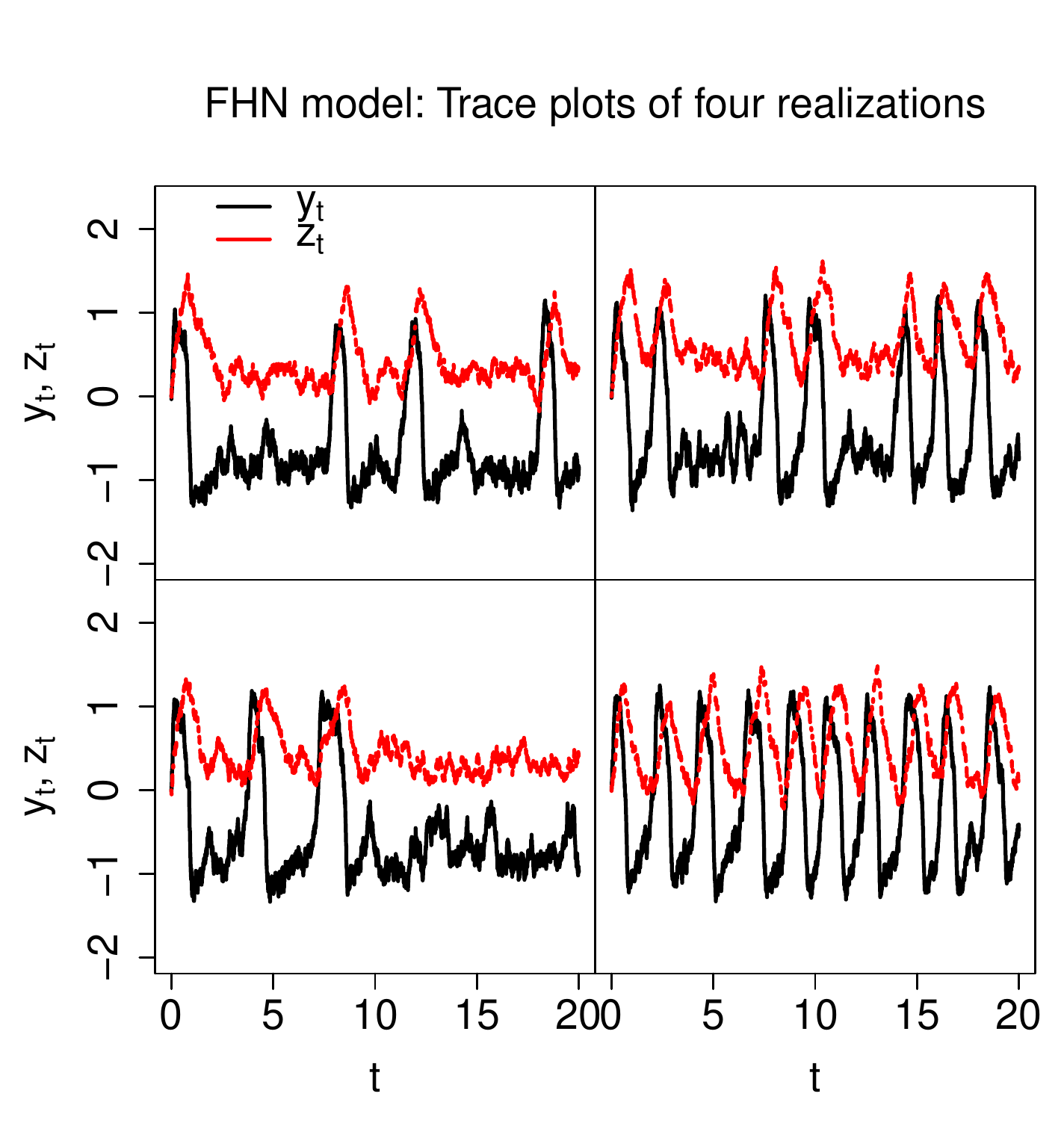}
	\caption{	\label{fig:trace_plots_FHN}\footnotesize FHN model. Trace plots of four realizations of the stochastic FHN model with random effects. The corresponding (rounded) realized parameter values of $\mu + \phi^i$ are $(8.86,4.29,1.50,1.39), (10.16,7.49,1.73,1.40), (9.98,5.49,1.10,1.07), (9.26,5.17, 1.84, 1.01 )$.}
\end{figure}

{\footnotesize
	{\renewcommand{\baselinestretch}{0.6} 
\begin{table}	
	\caption{\label{tab:Ex2_1} \footnotesize FHN model. Shown are estimated relative bias and RMSE of  $\hat{\mu}$ and diag$(\hat{\Omega})$. 
		The sample size is fixed to $N=50$, but different sampling intervals are considered ($\Delta t = 0.001, 0.01, 0.1$). 
		For each value of $\Delta t$, the estimation was repeated on $M=500$ generated data sets.}	
	\begin{tabular}{r r@{\hskip 0.5cm} rr r@{\hskip 0.5cm} rr r@{\hskip 0.5cm} rr}
		\\
		\hline\\
		& &\multicolumn{2}{c}{$\Delta t=0.001$} & & \multicolumn{2}{c}{$\Delta t=0.01$} & & \multicolumn{2}{c}{$\Delta t = 0.1$}\\[0.2cm]
		\cline{3-4}   \cline{6-7}  \cline{9-10}\\
		true value && rel. bias & RMSE && rel. bias & RMSE && rel. bias & RMSE \\[0.2cm]
		\hline\\[-0.1cm]
$\varepsilon\quad$ 0.10 &  & 0.003 & 0.022 &  & 0.030 & 0.037 &  & 0.356 & 0.356 \\ 
$s\quad$0.50 &  & 0.001 & 0.033 &  & 0.002 & 0.033 &  & 0.009 & 0.035 \\ 
$\gamma\quad$1.50 &  & -0.000 & 0.031 &  & -0.006 & 0.032 &  & -0.079 & 0.121 \\ 
$\eta\quad$1.20 &  & -0.001 & 0.031 &  & -0.005 & 0.032 &  & -0.051 & 0.067 \\ 
$1/\varepsilon\quad$10.00 &  & -0.003 & 0.216 &  & -0.028 & 0.349 &  & -0.262 & 2.624 \\ 
$s/\varepsilon\quad$5.00 &  & -0.002 & 0.135 &  & -0.026 & 0.188 &  & -0.256 & 1.283 \\[0.2cm]
2.25 &  & -0.048 & 0.469 &  & -0.155 & 0.539 &  & -0.690 & 1.563 \\ 
diag$(\Omega)\quad$1.00 &  & -0.025 & 0.197 &  & -0.044 & 0.197 &  & -0.281 & 0.318 \\ 
0.04 &  & -0.035 & 0.010 &  & -0.044 & 0.010 &  & -0.212 & 0.012 \\ 
0.04 &  & -0.007 & 0.010 &  & -0.028 & 0.009 &  & -0.218 & 0.012 \\[0.2cm]
		\hline
	\end{tabular}
\end{table}}}

\begin{figure}[h!]
	\centering
	\includegraphics[width = 0.4\textwidth]{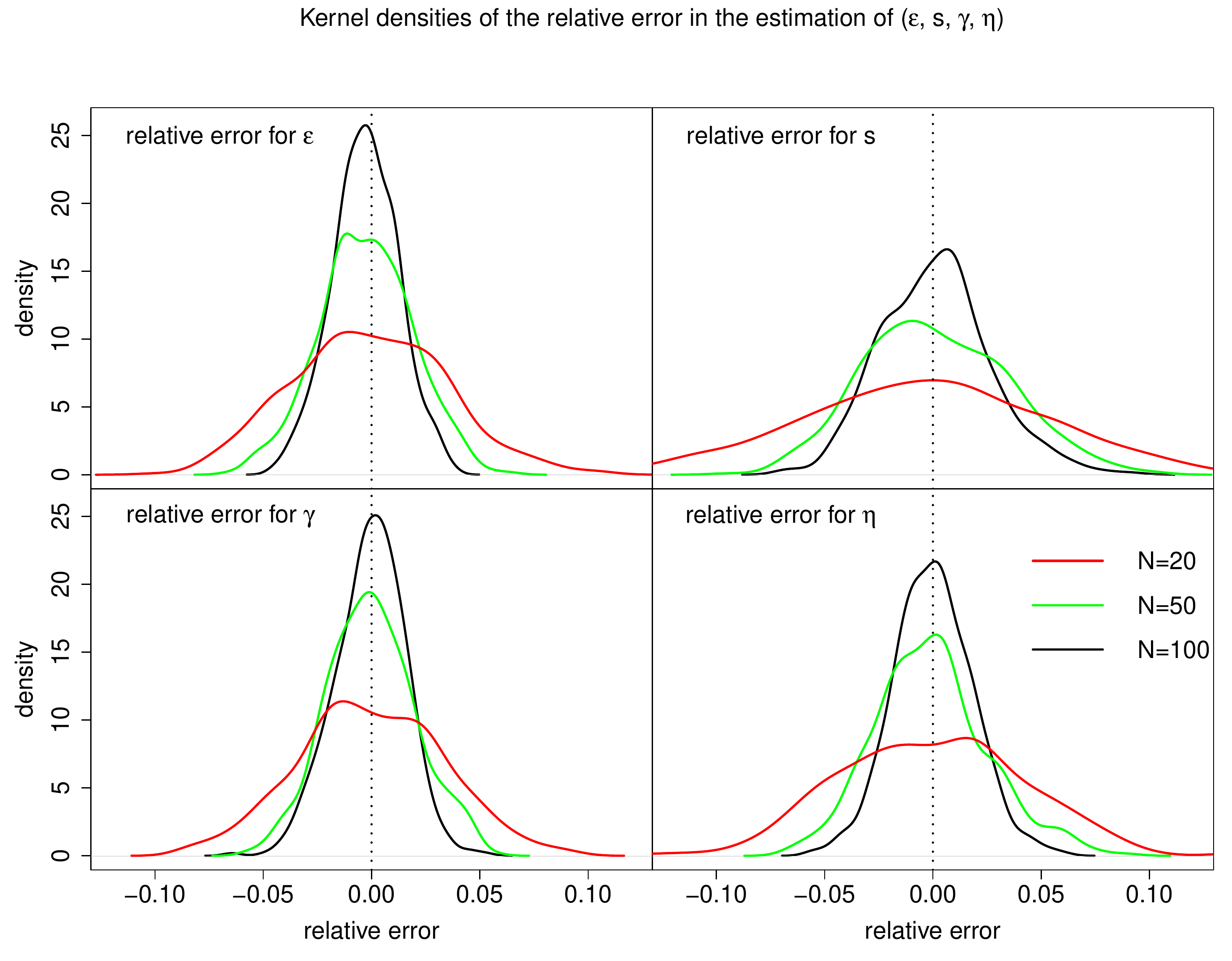}
	\caption{	\label{fig:FHN_relErr} \footnotesize FHN model. 
		Density plots of the relative error err$_{\nu_j}(N) = \frac{\hat{\nu}_j^{(m)} - \nu_j}{\nu_j}, m=1,\ldots,M,$ of the estimation of fixed effects, given here on the original $\nu$-scale. The sampling interval is $\Delta t = 0.001$, but different sample sizes are considered: $N=20$ (red), $N=50$ (green), $N=100$ (black). Estimation was repeated on $M=500$ generated data sets. }
\end{figure}

\section{Discussion}
Stochastic differential equations with random effects constitute an attractive class of statistical models, especially  for biological data. We extended the framework introduced in \cite{delattre2013maximum} to multidimensional and time-inhomogeneous state processes and proved the consistency of the maximum likelihood estimator using standard tools when the random effects enter the model linearly. However, the dynamics are allowed to be non-linear in the state. This particular setting comprises numerous well-known models, such as the predator-prey (or Lotka-Volterra) model \citep{murray2002mathematical}, the Lorenz equations introduced by \citet{lorenz1963deterministic}, which have been used to model, e.g.,  temperature, wind speed and humidity, the Brusselator model \cite[19.4]{kondepudi2014modern}, the  FHN model (see section \ref{simulations}) or the SIR (susceptible-infected-removed) model introduced by \cite{kermack1927contribution}, an epidemic model which has widely been studied and applied \citep{keeling2008modeling, jiang2011asymptotic, guy2015approximation}. 
 
We  examined the extension of the i.i.d. model to independent, but not identically distributed observations, with particular emphasis on the inclusion of covariate information. We pointed out the fundamental difference to  assumptions on the covariates that are standard in regression analysis,  gave  conditions for asymptotic normality of the MLE  (or, more generally, the LAN of the models) when the observations do not come from the same distribution and illustrated their verification by means of an example. \looseness=-1

The quality of the estimation in terms of sample size and sampling frequency was investigated in two simulation studies. In the first one, we use a model with covariates, which is linear in parameters and state. When observations are sampled at high frequency, estimation results were  convincing already for small sample sizes ($N=20$), despite the comparably large number (11 fixed effects and 6 variances) of unknown parameters. A moderate sampling interval (of $\Delta t = 0.01$) still gave  good results for all considered sample sizes. However, when sampling at low frequency ($\Delta t = 0.1$), the discrete-time bias  makes itself felt. Similar observations could be made in the second simulation setting. Here, we considered a stochastic FHN model, which is linear in the parameter, but non-linear in the state. It is an example of a diffusion which violates common assumptions on the growth of the drift function of a diffusion, which ensure the validity of many results (existence of solution, absolute continuity of measures, convergence of the time-discretized estimator to its continuous-time analogue). The estimation results are  accurate for high-frequency data, even at small sample sizes ($N=20$), and also for moderate frequency data, the estimation was still  convincing (except for the variance of the time-scale random effect). A considerable drop in accuracy occurred for low-frequency data (sampling interval of $\Delta t = 0.1$). If this method is to be used on such kind of data, algorithms for data imputation should be  applied prior to estimation, in order to reduce the discrete-time bias. 

The asymptotic normality of the MLE lends itself naturally to hypothesis testing of parameters by means of the Wald test. For the linear transfer model we estimate the false-positive rate, revealing a slight liberalism of the test procedure, and compute the test's power for different "true" values of parameters. \looseness=-1

We have only studied the method's applicability to models with up to 17 parameters. Even in the case of an explicit likelihood, the MLE of the (unknown) covariance matrix of the random effects vector is implicit and estimation requires numerical optimization, which may hamper estimation when the parameter space has a high dimension. Another drawback is the already mentioned inherent discrete-time bias of the estimation procedure. It is negligible if observations are sampled at sufficiently high frequency, but for low-frequency observations, a severe bias occurs (cf. simulation results), which is to bear in mind in applications. A possible solution could be to impute data at time points in between observation times, and conduct the estimation on the enlarged data set \citep{bladt2016simulation}. Related to that is the problem of incomplete observations, where only some of the coordinates in the state space are observed, and an entire path of a completely unobserved (latent) coordinate should be inferred \citep{berg2013synaptic, ditlevsen2014estimation}. Missing observations of one or more coordinates is not untypical for biological data. This, at a first step, prohibits application of the proposed estimation procedure, as it relies on the assumption of complete data observations. Such statistical recovery of hidden state coordinates remains a topic for future research.

\section*{Acknowledgments}
The work is part of the Dynamical Systems Interdisciplinary Network, University
of Copenhagen. Adeline Samson has been partially supported by the LabExPERSYVAL-Lab (ANR-11-LABX-0025-01).

\footnotesize
\bibliographystyle{chicago}
\bibliography{/Users/mareile/Dropbox/Forskning/InferenceSDEs/parameter_estimation2}
\end{document}